\DeclarePairedDelimiter\ceil{\lceil}{\rceil}
\newcommand\blfootnote[1]{%
  \begingroup
  \renewcommand\thefootnote{}\footnote{#1}%
  \addtocounter{footnote}{-1}%
  \endgroup
}
\newcommand{\Rmnum}[1]{\expandafter\@slowromancap\romannumeral #1@}
\newtheorem{theorem}{Theorem}
\newtheorem{remark}{Remark}
\newtheorem{proposition}{Proposition}
\newtheorem{conclusion}{Conclusion}
\newtheorem{lemma}{Lemma}
\def\BState{\State\hskip-\ALG@thistlm}
\setlist[itemize]{leftmargin=*}
\begin{document}
%
\title{The Age of Incorrect Information: an Enabler of Semantics-Empowered Communication}
\author[$\S$]{Ali Maatouk}
\author[*]{Mohamad Assaad}
\author[$\dagger$]{Anthony Ephremides}
\affil[$\S$]{Paris Research Center, Huawei Technologies, Boulogne-Billancourt, France}
\affil[*]{TCL Chair on 5G, Laboratoire des Signaux et Syst\`emes, CentraleSup\'elec, Gif-sur-Yvette, France }
\affil[$\dagger$]{ECE Dept., University of Maryland, College Park, MD 20742}
\maketitle
\thispagestyle{fancy}
\pagestyle{fancy}
\fancyhf{}
\fancyheadoffset{0cm}
\renewcommand{\headrulewidth}{0pt} 
\renewcommand{\footrulewidth}{0pt}
\renewcommand*{\thepage}{\scriptsize{\arabic{page}}}
\fancyhead[R]{\thepage}
\fancypagestyle{plain}{%
   \fancyhf{}%
   \fancyhead[R]{\thepage}%
}

\begin{abstract}
In this paper, we introduce the Age of Incorrect Information (\textbf{AoII}) as an enabler for semantics-empowered communication, a newly advocated communication paradigm centered around data's role and its usefulness to the communication's goal. First, we shed light on how the traditional communication paradigm, with its role-blind approach to data, is vulnerable to performance bottlenecks. Next, we highlight the shortcomings of several proposed performance measures destined to deal with the traditional communication paradigm's limitations, namely the Age of Information (\textbf{AoI}) and the error-based metrics. We also show how the AoII addresses these shortcomings and captures more meaningfully the purpose of data. Afterward, we consider the problem of minimizing the average AoII in a transmitter-receiver pair scenario. We prove that the optimal transmission strategy is a randomized threshold policy, and we propose an algorithm that finds the optimal parameters. Furthermore, we provide a theoretical comparison between the AoII framework and the standard error-based metrics counterpart. Interestingly, we show that the AoII-optimal policy is also error-optimal for the adopted information source model. Concurrently, the converse is not necessarily true. Finally, we implement our policy in various applications, and we showcase its performance advantages compared to both the error-optimal and the AoI-optimal policies. \blfootnote{A preliminary version of this work has been presented at the 2022 IEEE International Conference on Communications \cite{9838737}.}
\end{abstract}




%

\section{Introduction}
\color{black}
In the last decade, communication systems have witnessed astronomical growth in both traffic demand and widespread deployment. Thanks to the technological advances in battery productions and the cheap cost of radio-enabled devices, communication systems are no longer constrained to the traditional data and voice exchange frameworks. Today, wireless devices provide essential services and play a vital role in various disciplines. For example, the Internet of Things (\textbf{IoT}) revolution is reshaping modern healthcare systems by incorporating technological, economic, and social prospects. This was witnessed lately amid the global COVID-19 pandemic, where wireless devices for tracking and collecting patient data were prevalent. This example barely scratches the surface as IoT systems are gaining massive momentum in many other domains. 
Given that we are just witnessing the tip of the iceberg, a natural question arises: are current communication paradigms suitable to deal with such demand? Furthermore, are we extracting the best possible performance from the communication networks?

Like any system, these networks' performance is contingent on the performance measure's choice that we set our goal to optimize. Traditionally, metrics like throughput, delay, and packet loss were adopted. Note that these metrics do not consider the packets' content and the amount of information they bring to the destination. Therefore, we can see that traditional communication paradigms follow a blind approach to data packets' content at both the physical and data link layers. 
In other words, at these layers, packets are treated equally regardless of the amount of information they will potentially bring to the destination.
 Given the anticipated astronomical growth in traffic demand and the potential interconnections between these systems, this content-blind approach to network optimization can lead to performance bottlenecks. Accordingly, researchers have been trying to push the boundaries of this traditional paradigm and establish more elaborate frameworks for network optimization. Perhaps one of the most recent successful efforts was the introduction of the Age of Information (\textbf{AoI}) \cite{6195689}. The AoI quantifies the notion of information freshness by measuring the information time lag at the destination. By incorporating this metric in the network's optimization, we give another dimension to the data packets as they will no longer be treated equally at these layers. For example, a packet is given more importance when its destination has not been updated for a while. Following its introduction, a surge in the number of papers on the AoI can be seen (we refer the readers to \cite{9380899,maatouk:tel-03028195} for a literature review). This surge is due to the expected performance improvement this added dimension will have in various applications (e.g., \cite{8000687,9241401,9007478,8406945,8845254,2018arXiv180103975J,8514816}). 

Although the AoI was shown to provide significant improvements to data freshness in various applications, it exhibits some critical shortcomings. Precisely, the AoI infers the importance of packets through their timestamps only and does not consider their content. Due to this property, recent works showed that age-optimal sampling policies are incapable of minimizing the prediction/mean squared error in remote estimation applications \cite{8812616}. 
 Given this shortcoming of the AoI, researchers have proposed data acquisition and scheduling schemes based on error minimization and the notion of the value of information in control theory (e.g., \cite{8812616,8737404,10.1145/3302509.3311050}). The adoption of error-based metrics in data acquisition and transmission decisions at the PHY/MAC layers allows us to abolish the separation principle prevalent in the traditional communication frameworks. Even though this is a step forward in the right direction, error-based metrics come short in capturing a crucial aspect of the communication: its goal. In fact, these metrics do not consider what the packets are used for, but rather their optimization aims solely to reduce the mismatch between the physical process and its estimate at the destination. Given that the communication's goal is neglected, adopting these metrics could hinder achieving the desired goal.

To address these shortcomings, the present authors and several other researchers have been recently advocating for a new communication paradigm based on the notion of ``Semantics of Data" \cite{9137714,9083812,9475174}. The framework of semantics has been previously proposed in \cite{ShannonWeaver49} for the case of language communication. In this framework proposed in
1949, which is suitable for voice/text/images-related applications, the importance of a message consists of its contribution to the meaning that wants to be conveyed to the distant receiver. For other types of applications, such as real-time monitoring and Machine-to-Machine applications, semantics of data is evaluated with respect to goal oriented metrics that capture the receiver’s utility for information. In other words, semantics of data is employed here to express the data significance and usefulness to the communication's goal. \color{black}To understand this concept, let us consider an example of a communication network involving various temperature sensors and a central controller. In these settings, the goal is not to always have timely packets delivered about the sensors' temperature processes nor to minimize the mismatch between the temperature processes and their estimates at the controller.
On the contrary, the sole goal is to make sure the controller reacts swiftly to any abnormal temperature rise. Therefore, to extract the best performance out of the network, our system's design must undoubtedly include the purpose of the data involved. In this case, when sampling or transmitting packets, we look at the bigger picture of how vital these packets are to achieve our prescribed goal. Using the notion of data semantics, the objective is to establish a network optimization framework that is adaptable to any communication goal by merely changing a set of parameters of a general performance metric. This brings us to the new notion of Age of Incorrect Information (\textbf{AoII}), proposed by the present authors in \cite{9137714} that can be considered as a step toward that ultimate goal.

The AoII was introduced to address the shortcomings of both the AoI and the error-based metrics by incorporating the semantics of data more meaningfully. Specifically, the AoII is a proposed performance measure that captures the significance of a packet within a specific general communication goal through two aspects 1) an information-penalty aspect and 2) a time-aspect function. As will be seen in the remainder of the paper, by definition,  the AoII considers the content of packets, the information knowledge at the destination, and the effect of the mismatch between the physical process and its estimate on the overall communication's goal. Interestingly, we will show that many real-life applications' communication goals are merely variants of the AoII obtained by tweaking specific parameters. To that end, we summarize in the following the key contributions of this paper:
\color{black}
\begin{figure*}[ht]
\centering
\begin{subfigure}{0.33\textwidth}
  \centering
  \includegraphics[width=.9\linewidth]{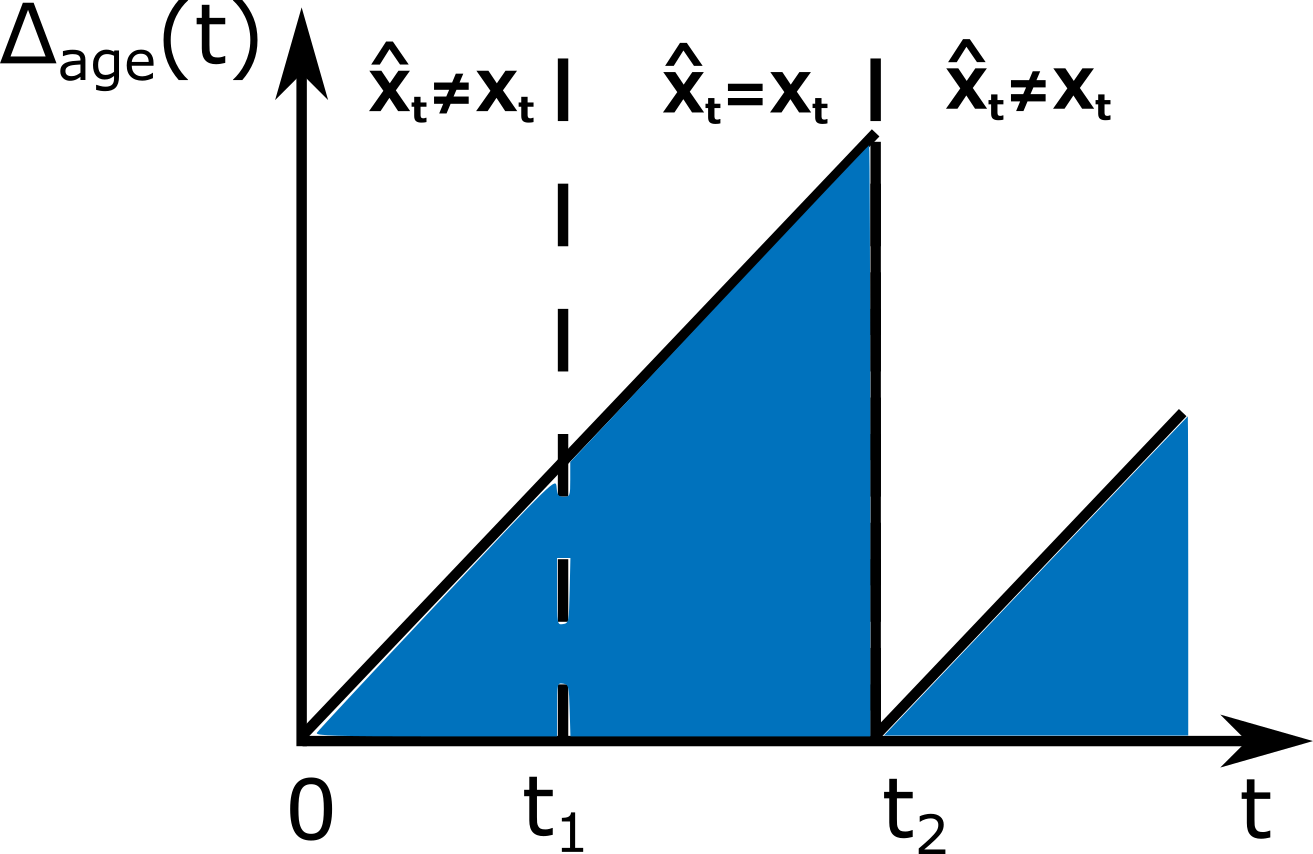}
  \caption{Age penalty function.}
    \label{agemetric}
\end{subfigure}%
\begin{subfigure}{0.33\textwidth}
\centering
  \includegraphics[width=.99\linewidth]{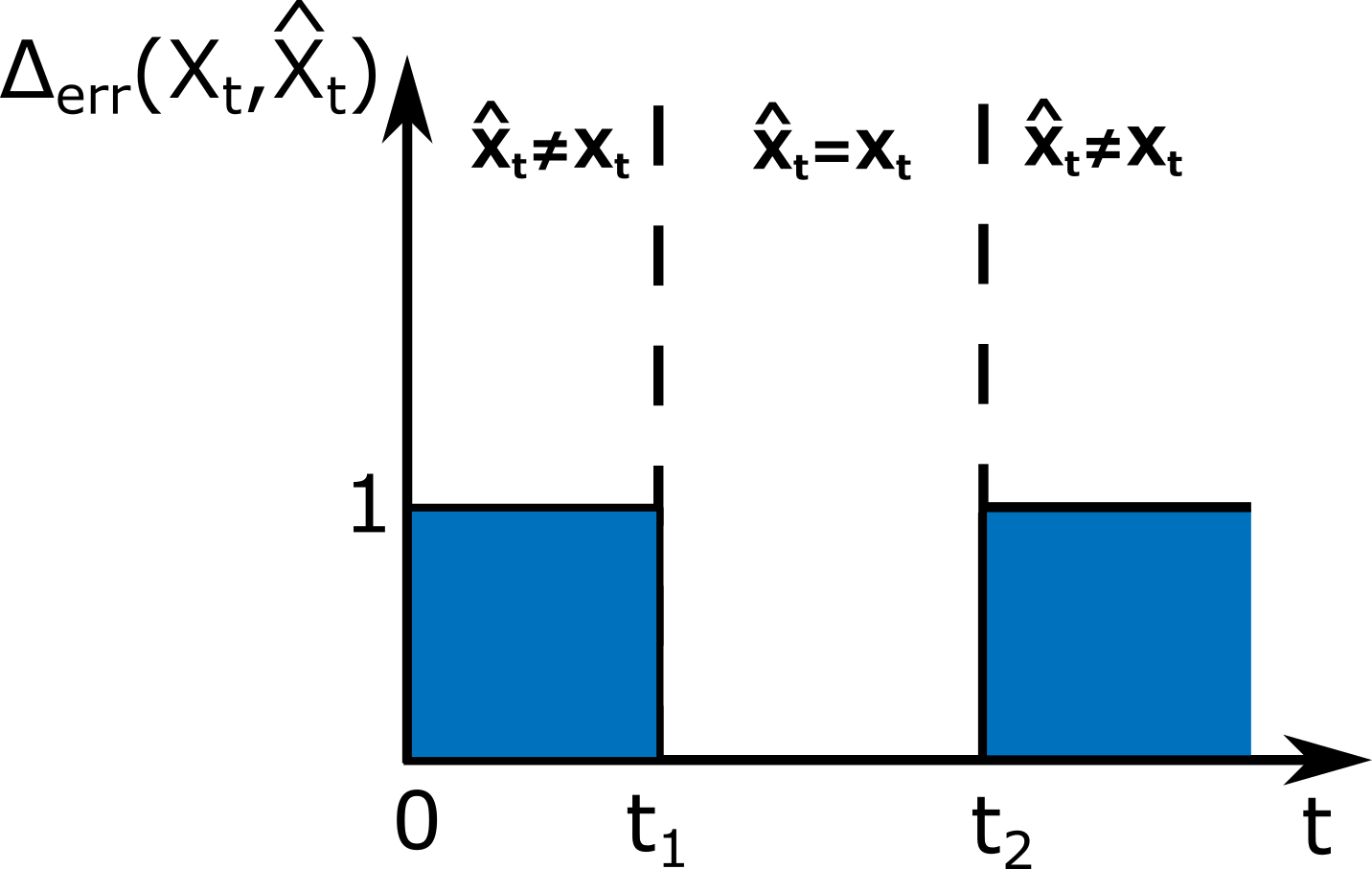}
  \caption{Error penalty function.}
\label{errmetric}
\end{subfigure}%
\begin{subfigure}{0.33\textwidth}
\centering
  \includegraphics[width=.99\linewidth]{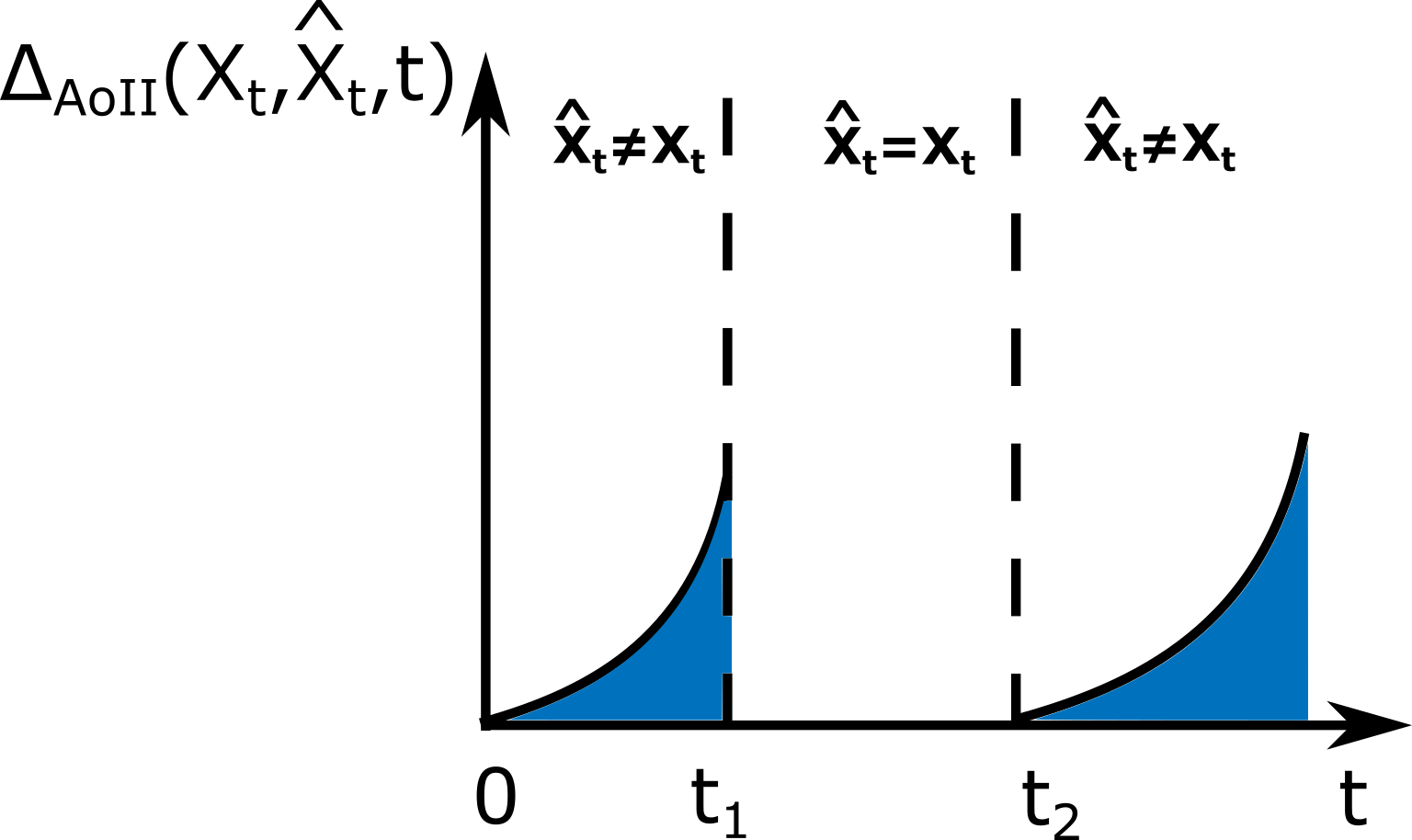}
  \caption{AoII penalty function.}
\label{effmetric}
\end{subfigure}%
\caption{Illustrations of the different penalty functions.}
\label{metrics}
\end{figure*}
\begin{itemize}
\item We consider the problem of minimizing the average AoII in a transmitter-receiver pair scenario where packets are sent over an unreliable channel subject to a transmission rate constraint. Compared to our previous work on the AoII \cite{9137714} where a linear version of the AoII was studied, we consider a more general version of the AoII where any non-decreasing dissatisfaction function $f(\cdot)$ can be adopted. This generalization leads to numerous technical challenges that we address in this paper. Particularly, in this paper, we adopt a different approach where 1) We provide structural results on the problem at hand for both unbounded and asymptotically bounded functions $f(\cdot)$, 2) In both cases, we derive an expression of the value function and the update rate for any threshold policy, 3) We leverage fundamental properties of the AoII to show that an optimal policy can be constructed through randomization, 4) Finally, we provide pseudocode of the optimal transmission policies and prove their logarithmic complexity.
\item Afterward, we provide a thorough comparison between the AoII framework and the standard error-based metrics counterpart. Since the seminal work of Sun et al. \cite{8812616}, a large part of the work on the AoI aimed to find connections between the AoI minimization framework and the standard MMSE (minimum mean squared error) and prediction error minimization frameworks (e.g., \cite{8845106}). Our work on the AoII provides a framework where we go beyond the AoI and the standard error metrics and focus directly on the communication goal, thus enabling semantics-empowered communications. One key question to answer is how such a framework compares with the traditional error frameworks. Kam et al., in one of their recent works \cite{9162726}, showcased numerically that the minimization of the AoII led to a minimization of the prediction error, hence increasing the importance to answer such a question. One key consequence of the generalization done in this paper was that we could answer that question within a theoretical framework. Curiously, our comparison leads to an interesting conclusion: for the adopted information source model, the AoII-optimal policy is also error-optimal. At the same time, the converse is not necessarily true. \color{black}
\item Lastly, we provide several real-life applications where the communication's goal can be formulated as an AoII minimization problem by adequately choosing $f(\cdot)$. Such applications allow us to frame the AoII as an enabler of semantics-empowered communication, which is a radical new communication paradigm that has been receiving significant attention recently for 6G networks (e.g., \cite{CALVANESESTRINATI2021107930}). For the applications mentioned above, we show how our approach achieves a significant performance advantage compared to the AoI and the standard error metrics frameworks. 
\end{itemize}
The rest of the paper is organized as follows: Section \ref{whyage}
is dedicated to the motivation behind the AoII. The system model, along with the dynamics of the AoII, are presented in Section \ref{modellll}. Section \ref{optimizationgeneralklshi} presents our optimization approach to the problem at hand, along with the main results of the paper. In Section \ref{comparisonwitherror}, we theoretically compare the AoII-optimal transmission policy to the error framework and provide a key comparison between them. In Section \ref{numericalresults}, we provide real-life applications that fall within our framework and showcase the advantages of the AoII compared to both the AoI and error-based approaches. Lastly, we conclude our paper in Section \ref{conclusionsss}.
\section{Why The Age of Incorrect Information?}
\label{whyage}
To understand the notion of AoII, it is best to consider a basic transmitter-receiver system where a process $X_t$ is observed by the transmitter. For example, $X_t$ can be a machine's temperature, a vehicle's velocity, or merely the state of a wireless channel. To that end, $X_t$ is subject to possible changes at any time instant $t$, and these changes have to be reported to the monitor (receiver) through the transmission of status updates packets. Using these packets, the monitor creates an estimate of $X_t$ at each time $t$, denoted by $\hat{X}_t$. The monitor uses these estimates to complete tasks, make decisions, or carry out commands. Therefore, it is easy to see that the system's performance is contingent on a proper estimation of $X_t$ at each time $t$. Ideally, we would like to have a perfect estimation where $\hat{X}_t=X_t$ at any time instant $t$. However, given many limiting factors, such as the delay in wireless channels, this is not feasible in practice. Accordingly, one must adopt a particular penalty/utility function for which its minimization/maximization helps us achieve the system's best possible performance.

Traditionally, wireless networks have been looked at as a content-agnostic data pipe. In other words, the content of the data packets and the role they play in the broader scope of an application at the receiver have been overlooked from a network optimization perspective. To that end, the conventional goal in the communication paradigms has been to merely optimize network-based metrics such as throughput or delay through a smart allocation of the available resources. However, this approach strips away the context from the data. Therefore, packets are treated as equally important, regardless of the amount of information they bring to the monitor. Given the astronomical growth in data demand, the ubiquitous wireless connectivity, and the abundance of remote monitoring applications, a more effective approach to network optimization has to be adopted. Accordingly, the research community has been intensively trying to propose new network optimization frameworks to achieve this efficacy. To this date, the proposed frameworks generally fall into one of the two following groups:
\begin{enumerate}
\item Age-based metrics framework
\item Error-based metrics framework
\end{enumerate}
First, let us discuss the age-based metrics framework. The AoI, or simply the age, is defined as \cite{6195689}
\begin{equation}
\Delta_{\text{age}}(t)=t-U_t,
\label{agedef}
\end{equation}
where $U_t$ is the timestamp of the last successfully received packet by the monitor at time $t$. Essentially, the AoI captures the information time-lag at the monitor. To that end, the minimization of age-based metrics like the time-average age has been widely regarded as a means to achieve freshness in communication \cite{6195689}. This approach's idea is that with a guarantee of fresh data at the monitor, one would expect an overall better system performance. As one can see, contrary to the throughput and delay frameworks, adopting the AoI as a network performance metric avoids the equal treatment of packets. In fact, in this framework, data packets have the highest value when they are fresh. Consequently, the AoI lets us infer the importance of a packet using its generation time. Although the AoI is a step forward in the right direction, we can witness its fundamental flaw in many applications. To put this flaw into perspective, let us consider a time interval $[t_1,t_2]$ in which $X_t=\hat{X}_t$. In other words, during this interval, the monitor has a perfect estimate of the information process $X_t$. As seen from the age definition (\ref{agedef}) and Fig. \ref{agemetric}, the system is still penalized even in this time-interval. Due to this unnecessary penalization of the system, we can expect a waste of vital resources on useless status updates. This flaw is inherent in the AoI definition as it does not consider the current value of the information process and its estimate at the monitor. For this reason, age-optimal sampling policies were found to be sub-optimal in many remote estimation applications (e.g., \cite{8812616}). This leads us to the next class of proposed optimization frameworks: the error-based metrics framework.
\begin{remark}
It is worth mentioning that several \textbf{time-based} metrics have been proposed in the literature to address various shortcomings of the AoI. For example, the Age of Synchronization (\textbf{AoS}), which measures the time-elapsed since a new update was generated, was introduced for caching systems \cite{8437927}. Although they address several shortcomings of the AoI, these metrics remain time-based and do not depend on the mismatch between $X_t$ and $\hat{X}_t$, which limits their usage in remote estimation applications. 
\end{remark}

The error-based metrics framework consists of taking as a network performance measure a quantitative representation of the difference between $\hat{X}_t$ and $X_t$. The hope is, by incorporating the information on $X_t$ and $\hat{X}_t$ in the performance metric, we can better utilize the available resources to let $\hat{X}_t$ be close to $X_t$. Among the most common error-based metrics, we have  
\begin{equation}
\Delta_{\text{err}}(X_t,\hat{X}_t)=\mathbbm{1}{\{\hat{X}_t\neq X_t\}},
\label{errorfunc}
\end{equation}
\begin{equation}
\Delta_{\text{sq}}(X_t,\hat{X}_t)=(X_t-\hat{X}_t)^2,
\label{functionalmmse}
\end{equation} 
where $\mathbbm{1}\{\cdot\}$ is the indicator function. By minimizing the time-average of the metrics found in (\ref{errorfunc}) and (\ref{functionalmmse}), we obtain the celebrated Minimum Prediction Error (\textbf{MPE}) and the Minimum Mean Squared Error (\textbf{MMSE}) policies respectively \cite{8812616,1582293,9174333}. It is clear that this framework does not have the AoI's fundamental shortcomings. For example, as illustrated in Fig. \ref{errmetric}, the penalty of the system is equal to $0$ in the time-interval $[t_1,t_2]$ in which $X_t=\hat{X}_t$. Additionally, one can notice that, similarly to the AoI framework, adopting an error-based metric as a network performance measure avoids the equal treatment of data packets. Interestingly, in this framework, data packets have the highest value when the difference between the information they carry and $\hat{X}_t$ is large. Although the error-based metrics add a sense of meaning to the packets compared to throughput and delay, they also have underlying flaws. As seen in (\ref{errorfunc})-(\ref{functionalmmse}), the error-based metrics only consider the difference between $X_t$ and $\hat{X}_t$ to infer the importance of the packets. Given that a perfect match $X_t=\hat{X}_t$ for all $t$ is not feasible in realistic scenarios, we can see that this approach fails to capture the effect their mismatch has on the overall communication's goal. To see this more clearly, let us consider that the information process $X_t\in\{0,1\}$ tracks the temperature of a machine. Let us suppose that $X_t=0$ indicates that the machine is operating at a normal temperature at time $t$ while $X_t=1$ indicates that it is overheating. We consider that the estimate $\hat{X}_t$ is used by the monitor to react to any sudden temperature spike in the machine. Now, let us assume that a spike occurs in the time interval $[0,t_1]$. As illustrated in Fig. \ref{errmetric}, the error-based metrics will lead to a constant penalization of the system. However, it is well-known from the physical characteristics of materials that an abnormal temperature rise's repercussions become more severe the longer that spike is prolonged. In the same spirit, this flaw is highlighted when we consider the phenomena of error bursts. As seen in Fig. \ref{bursterrors}, the system's error penalty due to two bursts of errors of one timeslot is equivalent to that resulting from a single error of two timeslots. However, it is well-known that in a large variety of applications, the repercussions of a long burst of error are far more severe (e.g., video streaming \cite{4490280}). 
\begin{figure}[!ht]
\centering
\includegraphics[width=.99\linewidth]{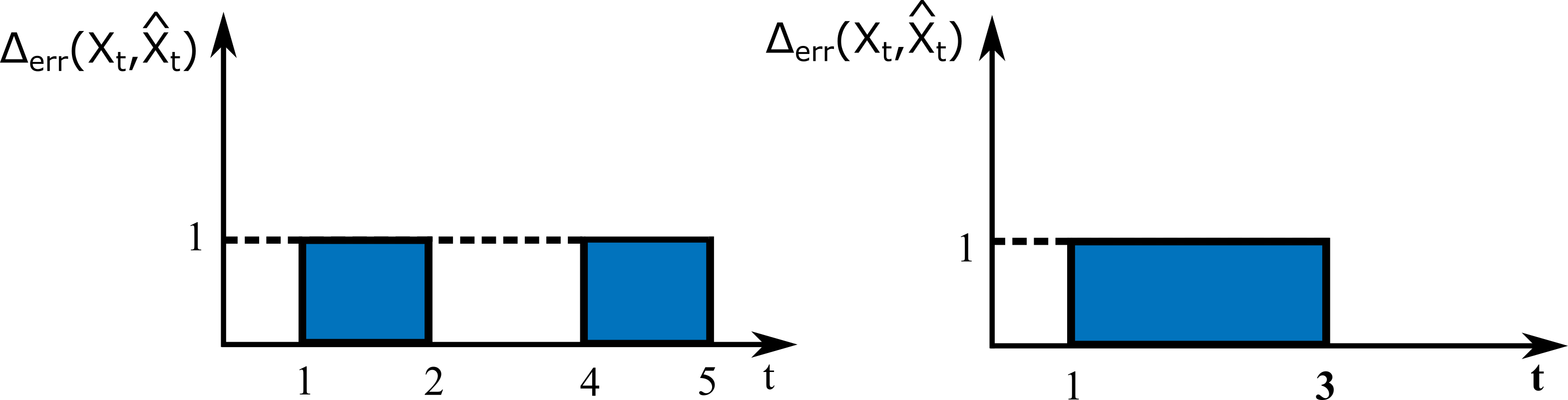}
\setlength{\belowcaptionskip}{-12pt}
\caption{Illustration of the burst errors situation.}
\label{bursterrors}
\end{figure}

Therefore, a better performance measure takes into account, not just the mismatch between $X_t$ and $\hat{X}_t$, but also how long that mismatch has been prevailing. By adopting such a metric, we capture more the context of data and their purpose. Accordingly, we can then enable semantics-empowered communication in the network, which is more elaborate than the AoI and the error-based frameworks. This leads us to our proposed metric: the AoII. We define the AoII as
\begin{equation}
\Delta_{\text{AoII}}(X_t,\hat{X}_t,t)=f(t)\times g(X_t,\hat{X}_t),
\label{propfunc}
\end{equation}
where $f: [0,+\infty)\mapsto [0,+\infty)$ is a non-decreasing function and $g(X_t,\hat{X}_t): \mathcal{D}\times \mathcal{D}\mapsto [0,+\infty)$ where $\mathcal{D}$ is the state space of $X_t$. The AoII is therefore a combination of two elements:
\begin{enumerate}
\item A function $g(\cdot,\cdot)$ that reflects the gap between $X_t$ and $\hat{X}_t$.
\item A function $f(\cdot)$ that plays the role of increasingly penalizing the system the more prolonged a mismatch between $X_t$ and $\hat{X}_t$ is.
\end{enumerate}
To better understand the metric, let us go back to the machine temperature example. As seen in Fig. \ref{effmetric}, the AoII is $0$ in the time-interval $[t_1,t_2]$ in which no mismatch exists. In the interval $[0,t_1]$, we can see that, unlike the error-based metrics, we are penalizing the system more the longer the mismatch lasts. As we have previously explained, this allows us to capture the purpose of the data being transmitted more meaningfully. Given that the performance of a network designed to take into account the purpose of data will always outperform any semantic-blind network, we delve into more details in the proposed AoII metric. The proposed AoII metric is quite general and presents itself as an umbrella for a large variety of performance measures depending on the selected functions $f(\cdot)$ and $g(\cdot,\cdot)$. For example, we can adopt for the function $g(\cdot,\cdot)$ any of the standard error-based metrics such as
\begin{itemize}
\item The indicator error function:
\begin{equation}
g_{\text{ind}}(X_t,\hat{X}_t)=\mathbbm{1}\{X_t\neq\hat{X}_t\}.
\label{functional7}
\end{equation}
We can choose this function when any mismatch between $X_t$ and $\hat{X}_t$, regardless of how big it is, equally harms the system's performance.
\item The squared error function:
\begin{equation}
g_{\text{sq}}(X_t,\hat{X}_t)=(X_t-\hat{X}_t)^2.
\label{functional8}
\end{equation}
Choosing this function implies that the larger the gap between $X_t$ and $\hat{X}_t$ is, the more significant its impact on the system's performance is.
\item The threshold error function:
\begin{equation}
g_{\text{threshold}}(X_t,\hat{X}_t)=\mathbbm{1}\{|X_t-\hat{X}_t|\geq c\},
\label{functional9}
\end{equation}
where $c>0$ is a predefined threshold. This is an adequate choice when the system's performance is immune to small mismatches between $X_t$ and $\hat{X}_t$. 
\end{itemize}
Next, to provide examples of the function $f(\cdot)$, we first define $V_t$ as the last time instant where $g(X_t,\hat{X}_t)$ was equal to $0$. Specifically, $V_t$ is the last time instant where the monitor had \textbf{sufficiently} accurate information about the process $X_t$. With this notion in mind, we provide in the following a few examples of $f(\cdot)$.
\begin{itemize}
\item The linear time-dissatisfaction function:
\begin{equation}
f_{\text{linear}}(t)=t-V_t.
\label{functional1}
\end{equation}
This can be used when the performance degrades uniformly with time when a mismatch occurs. \color{black}
\item The time-threshold dissatisfaction function:
\begin{equation}
f_{\text{threshold}}(t)=\mathbbm{1}\{t-V_t\geq \zeta\},
\label{functional3}
\end{equation}
where $\zeta>0$. We can choose this function when the system's performance can tolerate the mismatch between $X_t$ and $\hat{X}_t$ for a certain duration $\zeta>0$. Afterward, a penalty is incurred.\color{black}
\end{itemize}
Depending on the application at hand, we can adopt an appropriate choice of $f(\cdot)$ and $g(\cdot,\cdot)$ to capture the data's purpose. 
In simple applications, one may be able to derive explicitly these functions $f(\cdot)$ and $g(\cdot,\cdot)$ that capture the time and information facets playing a role in data significance as will be seen in later sections. However, in more complicated scenarios, one would need to fit the functions $f(\cdot)$ and $g(\cdot,\cdot)$ using gathered or generated data on the application of interest. 
\section{Model and Formulation}
\label{modellll}
\subsection{System Model}
We consider in our work a transmitter-receiver system where time is assumed to be slotted and normalized to the slot duration (i.e., the slot duration is taken as $1$). The transmitter observes an information process, denoted by $\big(X_t\big)_{t\in\mathbb{N}}$, that can change values with time and the transmitter's goal is to send status updates to keep the receiver up-to-date on the process' values\color{black}. To understand how this system works, let us suppose that the transmitter decides to transmit a packet at time $t$. Therefore, a sample of $X_t$ is generated, and the transmission stage immediately begins. The packet is transmitted over an unreliable channel where transmission errors may occur. We suppose that the channel realizations are independent and identically distributed over the timeslots and follow a Bernoulli distribution. In particular, the channel realization $h_t$ is equal to $1$ if the packet is successfully decoded by the receiver side and is $0$ otherwise. Given the Bernoulli assumption, we define the transmission success and failure probabilities as $\Pr(h_t=1)=p_s$ and $\Pr(h_t=0)=p_f=1-p_s$ respectively.
This model is motivated by short-packets transmission in block fading wireless channels where a target rate $R_{\text{target}}$ is required (we refer the readers to \cite[Section~II]{9174163}). 
\color{black}If the transmission is successful, the status update is delivered at time $t+1$, and the transmitter receives an instantaneous Acknowledgement (\textbf{ACK}) packet. The quick delivery and reliability of the ACK packets is a widely used assumption as these packets are typically small\color{black}. Accordingly, their transmission time can be considered negligible \cite{8000687,8514816}. Note that if an ACK is not received at $t+1$, the transmitter discards the old packet and generates a new update if it opts for a new transmission. By leveraging this mechanism, the transmitter can have perfect knowledge of the packets that arrive at the receiver. It is worth noting that the violation of the negligible ACK latency or its reliability dictates their consideration in the design of any optimization framework. \color{black}

Using the information found in the received updates, the receiver can only construct an estimate of the information process, denoted by $\big(\hat{X}_t\big)_{t\in\mathbb{N}}$. Let us now focus on the model of $X_t$ that we will adopt in the sequel. Similar to \cite{8812616}, we consider that the receiver's estimate of the information source is
\begin{equation}
\hat{X}_t=X_{U_t},
\end{equation}
where $U_t$ is the timestamp of the last successfully received packet by
the receiver at time t. In other words, the receiver takes the last successfully received update as an estimate of the information source. Let us now define $\big(d_t\big)_{t\in\mathbb{N}}=\mathbbm{1}\{|X_t-\hat{X}_t|\geq c\}$, where $c>0$ is a predefined threshold. The threshold $c$ can be thought to be a barrier between two regimes: 1) a GOOD regime where the  mismatches between $X_t$ and $\hat{X}_t$ does not affect the performance, and 2) a BAD regime where the contrary takes place. Large values of $c$ suggest that the system can tolerate to a certain extent mismatches between $X_t$ and $\hat{X}_t$, while small values of $c$ suggest its sensitivity to these mismatches. An illustration of this distinction of regimes can be found in Fig. \ref{sourcechain1111}. 
\begin{figure}[!ht]
\centering
\includegraphics[width=.9\linewidth]{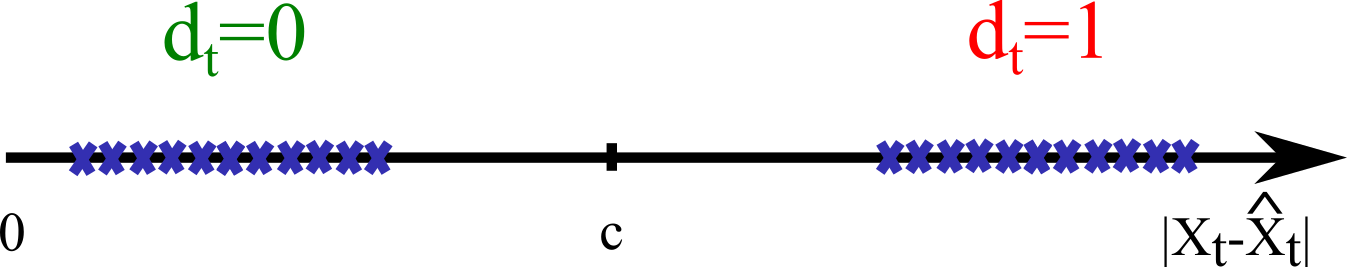}
\setlength{\belowcaptionskip}{-12pt}
\caption{Illustration of the regimes of the process $d_t$.}
\label{sourcechain1111}
\end{figure}\\
We focus in the sequel on the process $\big(d_t\big)_{t\in\mathbb{N}}$ due to its analytical advantage. In fact, tracking the exact continuous process' evolution, especially when adding a timing aspect as the AoII does, makes the theoretical analysis very challenging. This model accurately captures all the information about $X_t$ when $X_t$ typically experience either big changes or small changes but may come at a cost when we consider any general continuous process $X_t$. However, for tractability, we consider the former case in our paper.
\color{black}
Next, we tackle the modeling of $d_t$. Specifically, we consider that if no packets are delivered to the receiver, $\big(d_t\big)_{t\in\mathbb{N}}$ evolves as a $2$ states discrete Markov chain depicted in Fig. \ref{sourcechain} with parameters $\alpha$ and $\beta$. Note that these parameters will change based on the choice of $c$. \color{black}
\begin{figure}[!ht]
\centering
\includegraphics[width=.8\linewidth]{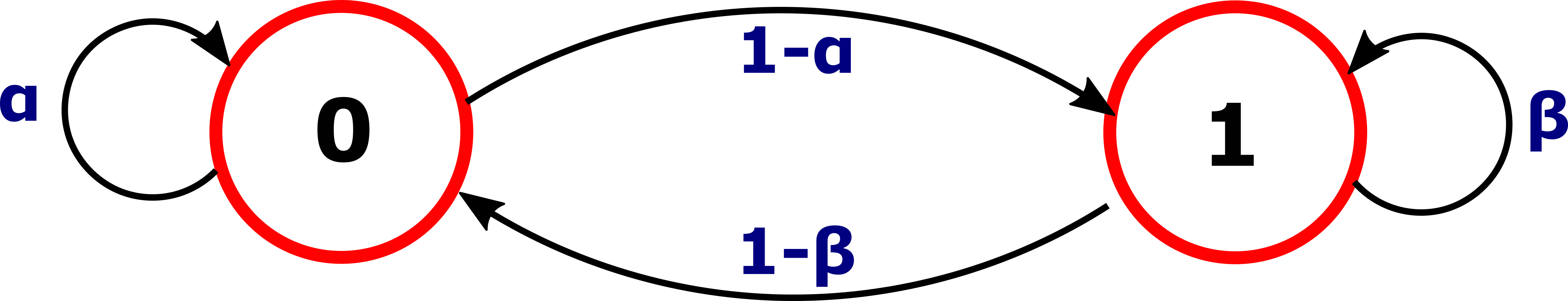}
\setlength{\belowcaptionskip}{-12pt}
\caption{Illustration of the process model.}
\label{sourcechain}
\end{figure}\vspace{3pt}\\
Although simple, this model encompasses a variety of real-life settings and have been adopted in numerous research works (e.g., \cite{8737404}). For instance, suppose that the observed process $X_t$ is a certain channel state and $\hat{X}_t$ is its estimate at the transmitter. By adopting a Markovian channel model, it can be shown that without any training using pilot symbols, $d_t$ can be modeled using a Markov chain similar to Fig. \ref{sourcechain}. Note that Markovian channels are a typical assumption for fading channels, and their usefulness is supported by experimental results (we refer the readers to \cite{350282}). On another note, Markov chains are also widely used to discretize and approximate continuous-valued processes (e.g., diffusion processes \cite{doi:10.1080/17442507908833139}, continuous-valued autoregression processes \cite{TAUCHEN1986177}). This puts in perspective the applicability of the adopted Markov chain model in various settings despite its simplicity. In addition, the simplicity of the model enables a better understanding of the dynamics and merits of the new performance measure.

Lastly, as is the case in realistic scenarios, we consider that the transmitter cannot send status updates at each timeslot. Precisely, due to battery limitations, for example, an average transmission frequency $\delta$ cannot be surpassed. Given the constraint on the transmission frequency and the random nature of the channel, the transmission policy's choice has an immense effect on the system's performance. As motivated in the previous subsection, we adopt the AoII as a performance measure of the system. To fully understand the evolution of the AoII, we provide details on its dynamics in the next subsection.
\subsection{System Dynamics}
\label{penaltydynamics}
In this paper, we focus on the class of AoII measures having the function $g(\cdot,\cdot)$ as $g_{\text{threshold}}(X_t,\hat{X}_t)=d_t$. To that end, let us define the system's state $S_t$ at time $t$ as
\begin{equation}
S_t=(t-V_t)d_t.
\end{equation}
Given that $t\in\mathbb{N}$, we have $S_t\in\mathbb{N}$. Next, as seen in the AoII examples given in (\ref{functional1})-(\ref{functional3}), the function $f(\cdot)$ is generally written in function of $t-V_t$. To that end, we consider in the sequel the class of non-decreasing dissatisfaction functions $f(\cdot)$ that can be written in terms of $t-V_t$. In this case, the AoII can be rewritten as $\Delta(X_t,\hat{X}_t,t)=d_tf(t-V_t)$. \color{black}With that in mind, we can rewrite the AoII as 
\begin{equation}
\Delta_{\text{AoII}}(X_t,\hat{X}_t,t)= f(S_t).
\end{equation}
Therefore, to characterize the AoII's evolution, it is sufficient to report the evolution of the system's state $S_t$. To that end, let $\psi_t$ denote the action taken at time $t$, where $\psi_t=1$ if a transmission is initiated and $0$ otherwise. Given the available actions that the transmitter can take and the possible transitions of the process $\big(d_t\big)_{t\in\mathbb{N}}$, it is essential to characterize the relationship between $S_{t+1}$ and $S_t$. To that end, we distinguish between two cases: \\
$\bullet$ \textbf{Case} $\boldsymbol{1}$ - $S_t=0$: In this case, $d_t=0$. Let us now assume that the transmitter decides to remain idle for the duration of the timeslot $t$. At the next timeslot $t+1$, we could end up in one of the following situations: 1) either $X_{t+1}$ will jump in value and $d_{t+1}$ becomes equal to $1$, or 2) $d_{t+1}$ remains equal to $0$. As per our adopted Markovian model for $d_t$, these two events happen with a probability $1-\alpha$ and $\alpha$ respectively. To that end, we obtain
\begin{align}
&\Pr\big(S_{t+1}=0|S_t=0,\psi_t=0\big)=\alpha,\nonumber\\ &\Pr\big(S_{t+1}=1|S_t=0,\psi_t=0\big)=1-\alpha.
\label{notransmissionequal0}
\end{align}
Let us now consider that the transmitter proceeds with a transmission at time $t$. This transmission will have no effect on $d_t$, as $d_t$ is already equal to $0$. In other words, we will still remain in the GOOD regime zone as was reported in Fig. \ref{sourcechain1111}. Now, given that the transitions of the process $(d_t)_{t\in\mathbb{N}}$ does not depend on the exact difference $|X_t-\hat{X}_t|$, we can conclude that regardless of the channel realization, we have
\begin{align}
&\Pr\big(S_{t+1}=0|S_t=0,\psi_t=1\big)=\alpha,\nonumber\\ &\Pr\big(S_{t+1}=1|S_t=0,\psi_t=1\big)=1-\alpha.
\label{transmissionequal0}
\end{align}
$\bullet$ \textbf{Case} $\boldsymbol{2}$ - $S_t\neq 0$: In this case, we have $d_t=1$. Let us now consider that the transmitter opted out from any transmission at time $t$. At the next timeslot $t+1$, we may end up in one of the following situations: 1) either $d_{t+1}$ will remain equal to $1$, or 2) $X_{t+1}$ jumps in value and we go back to the GOOD regime reported in Fig. \ref{sourcechain1111}. As per our adopted Markovian model for $d_t$, these two events happen with a probability $\beta$ and $1-\beta$ respectively. To that end, we obtain
\begin{align}
&\Pr\big(S_{t+1}=S_t+1|S_t\neq0,\psi_t=0\big)=\beta,\nonumber\\ & \Pr\big(S_{t+1}=0|S_t\neq0,\psi_t=0\big)=1-\beta.
\label{notransmissionnotequal0}
\end{align}
\color{black}
Let us now consider that the transmitter decides to transmit a status update to the monitor at time $t$. By taking into account the possible channel realizations, we distinguish between two cases:
\begin{itemize}
\item $h_t=0$: In this case, the packet is not successfully delivered to the monitor. Accordingly, from the monitor's perspective, this is similar to the case where no transmission is initiated. Therefore, the evolution of $S_t$ follows the transitions reported in (\ref{notransmissionnotequal0}). 
\item $h_t=1$: We recall that during the transmission time, the value of the information process may change. To that end, with probability $1-\beta$, $X_t$ would have jumped values, and the transmitted information became obsolete. In fact, in this case, by the time the receiver gets the information, the information has already changed and the error continues. Consequently, we have
\begin{align}
&\Pr\big(S_{t+1}=S_t+1|S_t\neq0,\psi_t=1,h_t=1\big)=1-\beta,\nonumber\\ &\Pr\big(S_{t+1}=0|S_t\neq0,\psi_t=1,h_t=1\big)=\beta.
\label{transmissionnotequal0}
\end{align}
\end{itemize}
By taking into account the independence between the transitions of the process $\big(d_t\big)_{t\in\mathbb{N}}$ and the channel realizations, we can summarize the transitions of $S_t$ as follows
\begin{align}
&\Pr\big(S_{t+1}=S_t+1|S_t\neq0,\psi_t=1\big)=p_f\beta+(1-\beta)p_s\triangleq a,\nonumber \\ & \Pr\big(S_{t+1}=0|S_t\neq0,\psi_t=1\big)=p_f(1-\beta)+p_s\beta=1-a.
\label{transmissionnotequal0total}
\end{align}
\color{black}
Given the above system's dynamics, one can notice a necessity to impose some restrictions on the parameters and functions involved. Effectively, for packet transmission to be useful to the system's performance, we need to have
\begin{equation}
\Pr\big(S_{t+1}=0|S_t\neq0,\psi_t=1\big)>\Pr\big(S_{t+1}=0|S_t\neq0,\psi_t=0\big).
\label{equationconditionparameter}
\end{equation}
If this condition is violated, then transmitting a packet does not improve the system's overall performance. Specifically, this means that the information process changes drastically at each timeslot to the point that if we transmit a packet, the packet becomes obsolete by the time it arrives at the receiver. From (\ref{equationconditionparameter}), we can conclude that the condition is equivalent to having $a<\beta$. Next, let us consider that a packet is transmitted at each timeslot. Given the dynamics of the system, we have $\Pr\big(S_{t+1}=S_t+1|S_t\neq0,\psi_t=1\big)=a$. In other words, even if a packet is transmitted at every timeslot, there is still a chance for the system's penalty to grow. 
To prevent the situation where even a transmission at each timeslot will still lead to an unbounded penalty, it is necessary to impose the following condition
\begin{equation}
\sum\limits_{k=0}^{+\infty}f(k)a^{k}<+\infty.
\label{conditiononf}
\end{equation}
Note that, for similar reasons, analogous conditions have been previously adopted in the AoI framework for communication over unreliable channels \cite{8845106}. With the system's evolution clarified, we can now formulate our problem and find its optimal solution.
\subsection{Problem Formulation}
Let $\pi$ represents a transmission policy that determines the packets being sent over time. The transmission policy $\pi$ is defined as a sequence of actions $\pi=(\psi^{\pi}_0,\psi^{\pi}_1,\ldots)$. Let $\Pi$ denotes the
set of all \emph{causal} scheduling policies, i.e., where the decisions are taken without any knowledge of the future. Our optimization problem can be formulated as follows
\begin{equation}
\setlength{\belowdisplayskip}{0pt} \setlength{\belowdisplayshortskip}{0pt}
\setlength{\abovedisplayskip}{0pt} \setlength{\abovedisplayshortskip}{0pt} 
\begin{aligned}
& \underset{\pi\in \Pi}{\text{minimize}}
& & J_{\pi}(S_0)\triangleq\limsup_{T\to+\infty}\:\frac{1}{T}\mathbb{E}^{\pi}\Big(\sum_{t=0}^{T-1}f(S^{\pi}_t)|S_0\Big),\\
& \text{subject to}
& & C_{\pi}(S_0)\triangleq\limsup_{T\to+\infty}\:\frac{1}{T}\mathbb{E}^{\pi}\Big(\sum_{t=0}^{T-1}\psi^{\pi}_t|S_0\Big)\leq \delta,
\end{aligned}
\label{originalobjective}
\end{equation}
where $f: [0,+\infty)\mapsto [0,+\infty)$ is a non-decreasing function of $S^{\pi}_t$, and $0<\delta\leq 1$ is the highest update rate allowed. The above problem belongs to the family of Constrained Markov Decision Process (\textbf{CMDP}), which are known to be generally challenging to solve optimally. To address these challenges, we proceed in the sequel with a Lagrange approach and provide a step-by-step analysis to solve problem (\ref{originalobjective}) optimally.
\section{Problem Optimization}
\label{optimizationgeneralklshi}
\subsection{Lagrange Approach}
\label{lagrangeapproachhh}
The Lagrange approach consists of transforming the constrained problem (\ref{originalobjective}) to an unconstrained one by incorporating the constraint in the objective function. Specifically, let us introduce the Lagrange multiplier $\lambda\in\mathbb{R}^{+}$. We define the Lagrangian function as
\begin{equation}
\mathcal{L}(\lambda,\pi)=\limsup_{T\to+\infty}\:\frac{1}{T}\mathbb{E}^{\pi}\Big(\sum_{t=0}^{T-1}f(S^{\pi}_t)+\lambda\psi^{\pi}_t|S_0\Big)-\lambda\delta.
\label{lagrangeobjective}
\end{equation}
Given that $\lambda\geq 0$, it can be regarded as a penalty that is paid for a packet transmission. Ideally, we would like to find a certain $\lambda^*$ for which minimizing the function (\ref{lagrangeobjective}) across all policies $\Pi$ allows us to derive the optimal policy of the constrained problem (\ref{originalobjective}). To proceed in that direction, let us consider the following optimization problem
\begin{equation}
\underset{\pi\in \Pi}{\text{min}}\:\:\mathcal{L}(\lambda,\pi),
\label{firstobjective}
\end{equation}
for any fixed $\lambda\in\mathbb{R}^+$. Knowing that $\lambda\delta$ is independent of the chosen policy $\pi$, the above minimization problem is equivalent to the following
\begin{equation}
\underset{\pi\in \Pi}{\text{min}}\:\:h(\lambda,\pi)=\underset{\pi\in \Pi}{\text{min}}\limsup_{T\to+\infty}\:\frac{1}{T}\mathbb{E}^{\pi}\Big(\sum_{t=0}^{T-1}f(S^{\pi}_t)+\lambda\psi^{\pi}_t|S_0\Big).
\label{secondobjective}
\end{equation}
Therefore, we focus on the optimization problem (\ref{secondobjective}). Based on the system's dynamics previously detailed in Section \ref{penaltydynamics}, the above problem can be cast into an infinite horizon average cost Markov Decision Process (\textbf{MDP}) as follows
\begin{itemize}
\item \textbf{States}: The state of the system $S_t$ coincides with that reported in Section \ref{penaltydynamics}. Accordingly, the state space of interest $\mathbb{S}$ is the space of natural numbers $\mathbb{N}$. 
\item \textbf{Actions}: At any time $t$, the possible actions that can be taken by the transmitter are to either initiate a new transmission ($\psi_t=1$) or to stay idle ($\psi_t=0$).
\item \textbf{Transitions probabilities}: The transitions probabilities between the different states correspond to those previously reported in Section \ref{penaltydynamics}.
\item \textbf{Cost}: Given the objective function of the problem, the instantaneous cost is set to $C(S_t,\psi_t)=f(S_t)+\lambda\psi_t$.
\end{itemize}
To obtain the optimal policy of an infinite horizon average cost MDP, if it exists, it is well-known that it is sufficient to solve the following Bellman equation \cite{Bertsekas:2000:DPO:517430}
\begin{equation}
\theta + V(S)=\min_{\psi\in\{0,1\}}\big\{f(S)+\lambda\psi+\sum_{S'\in\mathbb{S} }\Pr(S\rightarrow S'|\psi)V(S')\big\},
\label{generalbellmanequation}
\end{equation}
where $\Pr(S\rightarrow S'|\psi)$ is the transition probability from state $S$ to $S'$ given the action $\psi$, $\theta$ is the optimal value of (\ref{secondobjective}), and $V(S)$ is the differential cost-to-go function. However, this is notoriously known to be a challenging task \cite{Bertsekas:2000:DPO:517430}. We leverage our system's particularity to circumvent these challenges and provide key structural results on the value function $V(\cdot)$. Using these results, we proceed to solve the Bellman equation, as will be seen in the sequel.

%
%
%
%
\subsection{Structural Results}
As previously explained, we start by studying the particularity of the value function. Before doing so, we first distinguish between two types of functions $f(S)$ based on their behavior for large $S$. To that end, we define
\begin{itemize}
\item Unbounded $f(\cdot)$: In this case, the function $f(\cdot)$ grows indefinitely with the system's state
\begin{equation}
\lim_{S \to +\infty} f(S)=+\infty.
\end{equation}
The list of such functions includes the linear function reported in (\ref{functional1}).
\item Bounded $f(\cdot)$: In this case, the penalty of the system saturates and reaches a fixed limit
\begin{equation}
\lim_{S \to +\infty} f(S)=L>0.
\end{equation}
An example that belongs to this family of functions is the time-threshold function reported in (\ref{functional3}).
\end{itemize}
To analyze the bounded function case, we will proceed with a truncation of the state space $\mathbb{S}=\mathbb{N}$. Specifically, from the limit definition, we have
\begin{equation}
\forall \epsilon>0,\:\exists S_{\textnormal{thresh}}: \forall S\geq S_{\textnormal{thresh}}, |f(S)-L|<\epsilon.
\end{equation}
Accordingly, we can choose an arbitrarily small $\epsilon$ such that  $f(S)\approx f(S_{\textnormal{thresh}}), \:\: \forall S\geq S_{\textnormal{thresh}}$. To that end, we let $\mathbb{S}=\{0,1,\ldots,S_{\textnormal{thresh}}\}\subseteq\mathbb{N}$. Although this truncation will have a negligible effect on the performance for a small $\epsilon$, it will prove to have analytical benefits in deriving the optimal transmission policy. With this distinction in mind, we lay out the following lemma. 
\begin{lemma}[Non-decreasing Property of $V(\cdot)$]
For both function classes, the differential cost-to-go function $V(S)$ is a non-decreasing function of $S$.
\label{nondecreasingproperty}
\end{lemma}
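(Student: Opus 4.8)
The plan is to establish the non-decreasing property by value iteration, handling the interior states $S\geq 1$ by a routine induction and isolating the boundary state $S=0$, where the transition structure is genuinely different, by a direct manipulation of the Bellman equation. Concretely, for the unbounded case I would first pass to the $\rho$-discounted version of the MDP (\ref{secondobjective}), whose value function $V_\rho$ is the limit of the iterates $V_\rho^{(n+1)}=T_\rho V_\rho^{(n)}$ started from $V_\rho^{(0)}\equiv 0$. The crucial structural observation is that at every state $S\geq 1$ both actions send the chain into $\{0,S+1\}$, so the discounted operator factors as $(T_\rho g)(S)=f(S)+H_g\big(g(S+1)\big)$ with $H_g(x)=\min\{\rho\beta x+\rho(1-\beta)g(0),\,\lambda+\rho a x+\rho(1-a)g(0)\}$, a minimum of two strictly increasing affine functions of $x$ and hence non-decreasing in $x$. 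Therefore, if $g$ is non-decreasing on $\{1,2,\dots\}$ and $f$ is non-decreasing, then $(T_\rho g)(S+1)-(T_\rho g)(S)=[f(S+1)-f(S)]+[H_g(g(S+2))-H_g(g(S+1))]\geq 0$ for all $S\geq 1$. Since the base point $V_\rho^{(0)}$ is constant, induction yields that every $V_\rho^{(n)}$, and thus $V_\rho$, is non-decreasing on $\{1,2,\dots\}$.

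The remaining comparison $V_\rho(0)\leq V_\rho(1)$ is where the argument needs a different idea, because from state $0$ the chain reaches only $\{0,1\}$ and the chosen action is irrelevant. Transmitting at $S=0$ merely adds the nonnegative penalty $\lambda$ without altering the transition law, so the idle action is optimal there and the fixed-point equation at $S=0$ becomes $V_\rho(0)=f(0)+\rho\alpha V_\rho(0)+\rho(1-\alpha)V_\rho(1)$, which rearranges to $V_\rho(1)-V_\rho(0)=\big((1-\rho)V_\rho(1)-f(0)\big)/(1-\rho\alpha)$. Since $f$ is non-decreasing, every stage costs at least $f(0)$, hence $V_\rho(S)\geq f(0)/(1-\rho)$ for every $S$; in particular the numerator is nonnegative and $V_\rho(1)\geq V_\rho(0)$. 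Combining the two steps, $V_\rho$ is non-decreasing on all of $\mathbb{N}$. I would then close the unbounded case with the standard vanishing-discount argument: condition (\ref{conditiononf}) provides the uniform bounds that make $V_\rho(\cdot)-V_\rho(0)$ converge along a subsequence $\rho_k\to 1$ to a solution $V$ of (\ref{generalbellmanequation}), and a pointwise limit of non-decreasing functions is non-decreasing. For the bounded case one argues identically on the truncated finite chain $\{0,\dots,S_{\textnormal{thresh}}\}$, where relative value iteration converges by the unichain property, the same operator factorization holds, and at the top state the self-loop gives $V(S_{\textnormal{thresh}})-V(S_{\textnormal{thresh}}-1)=f(S_{\textnormal{thresh}})-f(S_{\textnormal{thresh}}-1)\geq 0$ directly.

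I expect the boundary state $S=0$ to be the main obstacle: the clean monotonicity induction for $S\geq 1$ rests on the fact that each such state moves only to $\{0,S+1\}$, a pattern that simply fails at $S=0$, and a brute-force attempt to propagate the induction through $S=0$ would force one to control $V(1)-V(0)$ against $\alpha$, $\beta$ and $a$, leading to extra parameter conditions. The device that avoids this — using the optimality of idling at $S=0$ together with the universal lower bound $V_\rho\geq f(0)/(1-\rho)$ — is, I believe, the heart of the proof; the rest, including the vanishing-discount passage, is routine and parallels arguments already standard in the AoI literature on unreliable channels.
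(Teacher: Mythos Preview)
Your proposal is correct and shares the paper's core idea---propagate monotonicity through value-iteration iterates using the fact that from every $S\geq 1$ both actions land in $\{0,S+1\}$, so the Bellman operator has the form $f(S)+H_g(g(S+1))$ with $H_g$ non-decreasing---but the surrounding machinery differs. The paper works directly with the \emph{relative} value iteration $V_{t+1}(S)=T(V_t)(S)-T(V_t)(0)$ (reference state $0$), so that $V_t(0)\equiv 0$ by construction; it then proves $V_t(S_2)\geq V_t(S_1)$ only for $S_2\geq S_1>0$ and simply notes that the comparison with $S=0$ is handled by the normalization. Your route instead passes through the $\rho$-discounted problem and closes with a vanishing-discount limit, and you isolate the boundary step $V_\rho(1)\geq V_\rho(0)$ by the clean fixed-point identity $V_\rho(1)-V_\rho(0)=\big((1-\rho)V_\rho(1)-f(0)\big)/(1-\rho\alpha)$ together with the universal lower bound $V_\rho\geq f(0)/(1-\rho)$. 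What this buys you is twofold: (i) the boundary inequality is proved rather than absorbed into the normalization, and (ii) convergence of discounted value iteration on a countable state space is unproblematic, whereas the paper's direct appeal to RVIA convergence in the unbounded-cost, countable-state setting leans on hypotheses that are only verified later in the paper. The price is the extra vanishing-discount step; the paper's argument is shorter once one accepts RVIA convergence. For the bounded (truncated) case the two arguments essentially coincide.
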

\begin{proof}
The proof is in Appendix \ref{proofoflemmanondecreasing}.
\end{proof}
\noindent Next, we leverage the above lemma to establish the fundamental proposition below. 
\begin{proposition}[Structure of the Optimal policy]
For any $\lambda\in\mathbb{R}^{+}$, and for both function classes, the transmission policy that optimally solves problem (\ref{secondobjective}) is a threshold policy.
\label{thresholdproposition}
\end{proposition}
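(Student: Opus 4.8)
The plan is to read off the structure of the optimal policy directly from the Bellman equation (\ref{generalbellmanequation}), using the monotonicity of $V(\cdot)$ from Lemma \ref{nondecreasingproperty} to show that the greedy action is monotone in the state. First I would dispose of the state $S=0$: by (\ref{notransmissionequal0})–(\ref{transmissionequal0}) the one-step transition law out of $0$ does not depend on the action, so the two arguments of the $\min$ in (\ref{generalbellmanequation}) differ only by the term $\lambda\psi\geq 0$; hence $\psi=0$ is (weakly) optimal at $S=0$. This already forces any threshold, if one exists, to lie in $\{1,2,\dots\}$.

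Next, fix $S\neq 0$ and write out the two branches of the $\min$ using the transitions (\ref{notransmissionnotequal0}) and (\ref{transmissionnotequal0total}): idling costs $f(S)+\beta V(S+1)+(1-\beta)V(0)$, whereas transmitting costs $f(S)+\lambda+aV(S+1)+(1-a)V(0)$. Subtracting, transmitting is optimal at $S$ if and only if
\begin{equation}
(\beta-a)\big(V(S+1)-V(0)\big)\ \geq\ \lambda.
\label{thresholdcondition}
\end{equation}
By the discussion following (\ref{equationconditionparameter}) we have $\beta-a>0$, so (\ref{thresholdcondition}) is equivalent to $V(S+1)-V(0)\geq\lambda/(\beta-a)$. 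Now comes the key step: by Lemma \ref{nondecreasingproperty} the map $S\mapsto V(S+1)-V(0)$ is non-decreasing (and non-negative) on $\{1,2,\dots\}$, so the set of states at which (\ref{thresholdcondition}) holds is upward closed, i.e. it is either empty or of the form $\{S\geq n^{*}\}$ for some $n^{*}\in\{1,2,\dots\}\cup\{+\infty\}$. Combined with the $S=0$ analysis, this shows that the stationary policy attaining the minimum in (\ref{generalbellmanequation}) — with ties broken in favour of idling, say — transmits exactly on $\{S\geq n^{*}\}$, which is precisely a threshold policy. For bounded $f(\cdot)$ the identical computation applies on the truncated chain $\mathbb{S}=\{0,1,\dots,S_{\textnormal{thresh}}\}$, with $n^{*}\in\{1,\dots,S_{\textnormal{thresh}}\}\cup\{+\infty\}$.

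I expect the genuine obstacle to be not this monotone-comparison argument, which is short once (\ref{thresholdcondition}) is available, but the prerequisite that an optimal stationary policy solving the average-cost Bellman equation (\ref{generalbellmanequation}) exists with a finite differential value function $V(\cdot)$ — a delicate point in the unbounded-$f$ case because the state space $\mathbb{N}$ is countably infinite and the per-stage cost is unbounded. I would handle this by invoking condition (\ref{conditiononf}), which guarantees a finite average cost under the always-transmit policy and hence a finite optimal value $\theta$ of (\ref{secondobjective}), together with a standard vanishing-discount (or uniformization) argument to produce a solution $(\theta,V)$ of (\ref{generalbellmanequation}) with $V$ finite and non-decreasing; in the bounded case the truncation makes existence automatic. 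Once that foundation is in place, the threshold structure follows immediately from (\ref{thresholdcondition}) and Lemma \ref{nondecreasingproperty}.
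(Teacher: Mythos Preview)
Your argument is correct and is, at its core, the same comparison the paper makes: for $S\neq 0$ the cost advantage of transmitting over idling equals $(\beta-a)V(S+1)-\lambda$ (up to the constant $V(0)$), and since $\beta>a$ and $V$ is non-decreasing by Lemma \ref{nondecreasingproperty}, this quantity is monotone in $S$, so the optimal action is monotone in $S$.

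The one genuine methodological difference is \emph{where} you locate the analysis. You work directly with the limiting Bellman equation (\ref{generalbellmanequation}) and therefore have to confront existence of a finite solution $(\theta,V)$ on the countable state space with unbounded cost; you correctly identify this as the real obstacle and propose a vanishing-discount argument under (\ref{conditiononf}). The paper instead carries the whole comparison at the level of the RVIA iterates: it defines $\Delta V_{t+1}(S)=\lambda+(a-\beta)V_t(S+1)$, uses the non-decreasing property of $V_t$ (established inductively in Lemma \ref{nondecreasingproperty}) to get monotonicity of $\Delta V_{t+1}$ in $S$ for every $t$, and then passes to the limit, citing convergence of RVIA. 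This buys the paper a slightly more self-contained route that avoids a separate existence discussion, at the price of invoking RVIA convergence; your route is more direct and makes the structural condition (\ref{thresholdcondition}) explicit, at the price of the existence step. Both lead to the same threshold conclusion, and your treatment of the $S=0$ state and of the bounded case (where the threshold may be $+\infty$, i.e.\ never transmit) matches the paper's.
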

\begin{proof}
The proof is in Appendix \ref{proofthresholdproposition}. 
\end{proof}
\noindent The above proposition allows us to have a road-map to solve the Bellman equation. Knowing that a threshold policy is optimal, we restrict our attention to this class of policies to simplify and solve the Bellman equation. Consequently, we lay out the following theorem.


\begin{theorem}[Optimal Policy]
The optimal transmission policy $\pi^*_\lambda$ can be summarized as follows
\begin{itemize}
\item Unbounded $f(\cdot)$: $\pi^*_\lambda$ is a threshold policy such that a transmission is initiated when $S_t\geq n^*_\lambda$ where
\begin{align}
n^*_\lambda=\inf\{&n\in\mathbb{N}^*:H(n)>0\}-1,
\end{align}
and $H(n)$ and $\theta_n$ are equal to
\begin{equation}
H(n)=\frac{-\theta_n(\beta-a)+\lambda(\beta-1)}{(1-a)(\beta-a)}+\sum\limits_{k=n}^{+\infty}f(k)a^{k-n},
\end{equation}
\begin{equation}
\theta_n=\frac{\frac{f(0)}{1-\alpha}+\sum\limits_{k=1}^{n-1}f(n-k)\beta^{n-k-1}+\beta^{n-1}\sum\limits_{k=n}^{+\infty}f(k)a^{k-n}+\frac{\lambda \beta^{n-1}}{1-a}}{\frac{1}{1-\alpha}+\frac{1-\beta^{n-1}}{1-\beta}+\frac{\beta^{n-1}}{1-a}}.
\end{equation}
\item Bounded $f(\cdot)$: $\pi^*_\lambda$ is a threshold policy if 
\begin{equation}
\lambda< \frac{(\beta-a)(f(S_{\textnormal{thresh}})-\theta_{S_{\textnormal{thresh}}})}{1-\beta},
\end{equation}
and the optimal threshold $n^*_\lambda$ is equal to
\begin{equation}
n^*_\lambda=\inf\{n\in\mathbb{S}\setminus\{0\}:H'(n)>0\}-1,
\label{optimalthresholdbounded}
\end{equation}
where $H'(n)$ and $\theta'_n$ are reported in Table \ref{allequationsfortheorem}. Otherwise, the optimal transmission policy $\pi^*_\lambda$ is to never transmit, and we set $n^*_\lambda=S_{\textnormal{thresh}}+1$. 
\end{itemize}
\label{maintheorem}
\end{theorem}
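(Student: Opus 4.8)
The plan is to exploit Proposition~\ref{thresholdproposition}: since we already know that an optimal solution of (\ref{secondobjective}) is a threshold policy, it suffices to (i) compute, for each candidate threshold $n$, the average cost $\theta_n$ and the differential cost-to-go function $V_n(\cdot)$ induced by the threshold-$n$ policy $\pi_n$ (transmit iff $S_t\geq n$), and then (ii) single out the value of $n$ for which $(\theta_n,V_n)$ solves the Bellman equation (\ref{generalbellmanequation}), which by standard MDP theory then yields the optimum.

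For step (i), note that under $\pi_n$ the state process $(S_t)$ is a Markov chain that regenerates every time it hits $0$. For $0<S<n$ the transitions are the idle ones (\ref{notransmissionequal0})--(\ref{notransmissionnotequal0}) and for $S\geq n$ the transmit ones (\ref{transmissionnotequal0total}). Writing the Poisson/Bellman equation $\theta_n+V_n(S)=C(S,\psi_n(S))+\sum_{S'}\Pr(S\to S'\,|\,\psi_n(S))V_n(S')$ with the normalization $V_n(0)=0$ produces a two-band linear recursion: in the idle band $1\le S\le n-1$ one solves a geometric-type recursion with ratio $\beta$, and in the transmit band $S\ge n$ a geometric-type recursion with ratio $a$, whose convergence is exactly what condition (\ref{conditiononf}) guarantees. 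Matching the two pieces at $S=n$ and using the equation at $S=0$ pins down $\theta_n$; this is where the displayed ratio for $\theta_n$ comes from — numerator $=$ expected accumulated cost over one regeneration cycle (the $f(0)/(1-\alpha)$ term from the sojourn at $0$, the $\sum_{k=1}^{n-1}f(n-k)\beta^{n-k-1}$ term from the idle band, and the $\beta^{n-1}\sum_{k\ge n}f(k)a^{k-n}$ and $\lambda\beta^{n-1}/(1-a)$ terms from the transmit band, weighted by the probability $\beta^{n-1}$ of reaching state $n$), denominator $=$ expected cycle length.

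For step (ii), with $V_n$ in hand define $H(n)$ as the (normalized) gain of idling over transmitting at state $n$ measured through the cost-to-go, i.e.\ proportional to $\big(\sum_{S'}\Pr(n\to S'\,|\,\psi=0)V(S')\big)-\big(\lambda+\sum_{S'}\Pr(n\to S'\,|\,\psi=1)V(S')\big)$, the positive proportionality constant being the $(1-a)(\beta-a)$ appearing in the statement. By Lemma~\ref{nondecreasingproperty}, $V(\cdot)$ is non-decreasing, which forces $n\mapsto H(n)$ to cross zero only once in the relevant direction, so $\{n:H(n)>0\}$ is an up-set and $\inf\{n:H(n)>0\}-1$ is well defined; at that threshold the action prescribed by $\pi_n$ coincides state-by-state with the minimizer in (\ref{generalbellmanequation}), so $\pi_{n^*_\lambda}$ solves the Bellman equation and hence (\ref{secondobjective}). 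Substituting the closed form of $V_n$ into the difference and simplifying collapses it to the displayed $H(n)$. The bounded case is handled identically after truncating the state space to $\{0,\dots,S_{\textnormal{thresh}}\}$, the only new feature being that ``never transmit'' ($n=S_{\textnormal{thresh}}+1$) is now an admissible threshold policy; one redoes the computation on the finite chain to obtain $\theta'_n$ and $H'(n)$ (Table~\ref{allequationsfortheorem}), and the stated inequality $\lambda<\frac{(\beta-a)(f(S_{\textnormal{thresh}})-\theta_{S_{\textnormal{thresh}}})}{1-\beta}$ is precisely $H'(S_{\textnormal{thresh}})>0$, i.e.\ the condition that transmitting even at the largest state remains worthwhile; if it fails, the crossing point is pushed past $S_{\textnormal{thresh}}$ and the optimal policy never transmits.

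I expect the main obstacle to be the bookkeeping of step (i)--(ii): solving the two-band recursion for $V_n$ explicitly and then checking that the algebraic simplification of the ``idle minus transmit'' difference really reduces to the compact $H(n)$ in the statement. A secondary delicate point is rigor in the unbounded case — justifying termwise convergence of $\sum_k f(k)a^{k-n}$ and the interchange of limit and sum via (\ref{conditiononf}), and invoking the monotonicity of $V$ from Lemma~\ref{nondecreasingproperty} to guarantee that the locally optimal threshold is globally optimal, so that the $\inf\{\cdot\}-1$ characterization is exact rather than merely a stationary point.
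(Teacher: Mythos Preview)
Your plan is essentially the paper's own proof: fix a threshold $n$, solve the two-band Bellman recursion (ratio $\beta$ below $n$, ratio $a$ at and above $n$) with the normalization $V(0)=0$, use the equation at $S=0$ together with the matching at $S=n$ to pin down $\theta_n$, and then locate $n^*$ as the switching point of the idle/transmit comparison; the bounded case is the same computation on the truncated chain, with the extra check that the switch may never occur. Your regeneration reading of $\theta_n$ (expected cycle cost over expected cycle length) is a pleasant alternative to the paper's purely algebraic derivation, but it yields the identical formula.

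One indexing caveat to watch in step~(ii): in the paper $H(n)=V_n(n)-\lambda/(\beta-a)$, which is the (rescaled) idle-minus-transmit cost difference at state $n{-}1$, not at state $n$, because the comparison at state $S$ reads $(\beta-a)V(S{+}1)-\lambda$. Your verbal description evaluates the difference at state $n$, which would produce $H(n{+}1)$ rather than $H(n)$; either convention gives the correct threshold, but only the paper's convention yields the displayed closed form for $H(n)$ and the ``$-1$'' in $n^*=\inf\{n:H(n)>0\}-1$. This is precisely the bookkeeping you already flag as the main obstacle.
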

\begin{table*}[ht]
\centering
\begin{tabular}{p{0.99\linewidth}}
\begin{equation*}
H'(n)=\frac{-\theta'_n(\beta-a)+\lambda(\beta-1)}{(1-a)(\beta-a)}+\sum\limits_{k=n}^{S_{\textnormal{thresh}}-1}f(k)a^{k-n}+\frac{a^{S_{\textnormal{thresh}}-S}f(S_{\textnormal{thresh}})}{1-a}
\end{equation*}
\begin{equation*}
\theta'_n=\frac{\frac{f(0)}{1-\alpha}+\sum\limits_{k=1}^{n-1}f(n-k)\beta^{n-k-1}+\beta^{n-1}\sum\limits_{k=n}^{S_{\textnormal{thresh}}-1}f(k)a^{k-n}+\frac{\lambda \beta^{n-1}}{1-a}+\beta^{n-1}\frac{a^{S_{\textnormal{thresh}}-S}f(S_{\textnormal{thresh}})}{1-a}}{\frac{1}{1-\alpha}+\frac{1-\beta^{n-1}}{1-\beta}+\frac{\beta^{n-1}}{1-a}}
\end{equation*}
\\
\hline
\caption{Expressions of $H'(n)$ and $\theta'_n$ in the bounded function case.}
\label{allequationsfortheorem}
\end{tabular}
\vspace{-28pt}
\end{table*}
\begin{proof}
The proof is in Appendix \ref{proofofmaintheorem}.
\end{proof}
\noindent The next step consists of deriving a closed-form expression of $C_{\pi^*_{\lambda}}$ for any $\lambda\in\mathbb{R}^+$. Finding this expression will allow us to propose an iterative algorithm later on that finds the optimal transmission policy, as will be seen in Section \ref{algoimplementation}. To that end, we provide the following proposition.
\begin{proposition}[Update Rate] The average update rate of the transmission policy $\pi^*_\lambda$ is
\begin{itemize}
\item Unbounded $f(.)$:
\begin{equation}
C_{\pi^*_{\lambda}}=\begin{cases}
\frac{(1-\alpha)\beta^{n^*_{\lambda}-1}}{(1-a)(1+\frac{(1-\alpha)(1-\beta^{n^*_{\lambda}})}{1-\beta}+\frac{(1-\alpha)a\beta^{n^*_{\lambda}-1}}{1-a})}, \: &\textnormal{if} \:\: n^*_{\lambda}\in\mathbb{N}^*,\\
1, \: &\textnormal{if} \:\: n^*_{\lambda}=0.\\
\end{cases}
\end{equation}
\item Bounded $f(.)$: it coincides with the unbounded case expression for any $n^*_{\lambda}\in\mathbb{S}$, and is equal to $0$ if $n^*_{\lambda}=S_{\textnormal{thresh}}+1$. 
\end{itemize}
\label{propositionupdaterate}
\end{proposition}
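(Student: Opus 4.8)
The plan is to invoke the structural result already in hand—Proposition~\ref{thresholdproposition} and Theorem~\ref{maintheorem} say $\pi^*_\lambda$ is a threshold policy with threshold $n^*_\lambda$—so that under $\pi^*_\lambda$ the state process $\big(S_t\big)_{t\in\mathbb{N}}$ is a time-homogeneous Markov chain that uses the idle transitions of Section~\ref{penaltydynamics} on $\{0,1,\dots,n^*_\lambda-1\}$ and the transmitting transitions on $\{S\ge n^*_\lambda\}$. Since $\psi_t=\mathbbm{1}\{S_t\ge n^*_\lambda\}$ is bounded, the long-run average update rate $C_{\pi^*_\lambda}$ coincides with $\Pr_\mu(S\ge n^*_\lambda)$, where $\mu$ is the stationary distribution, once the chain is shown to be positive recurrent. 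I would first record that this holds: the chain is irreducible and aperiodic (state $0$ carries a self-loop of probability $\alpha$, and from any state one returns to $0$ with positive probability $1-\beta$ or $1-a$), and its tail decays geometrically at rate $a<1$, so the invariant measure is summable, hence finite and unique. In the bounded case the chain already lives on the finite set $\{0,\dots,S_{\textnormal{thresh}}\}$, so ergodicity is automatic whenever the threshold is interior.

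Next I would solve the global balance equations explicitly, exploiting the ``cascade'' structure (from any state one either advances by one or falls back to $0$): writing $\mu_S=\mu(S)$ one gets $\mu_1=(1-\alpha)\mu_0$, then $\mu_S=\beta\mu_{S-1}$ for $1<S\le n^*_\lambda$ and $\mu_S=a\mu_{S-1}$ for $S>n^*_\lambda$, so that
\begin{equation*}
\mu_S=\begin{cases}(1-\alpha)\beta^{S-1}\mu_0, & 1\le S\le n^*_\lambda,\\[2pt] (1-\alpha)\beta^{n^*_\lambda-1}a^{S-n^*_\lambda}\mu_0, & S\ge n^*_\lambda,\end{cases}
\end{equation*}
the two branches agreeing at $S=n^*_\lambda$. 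Imposing $\sum_{S\ge 0}\mu_S=1$ and summing the two geometric series gives
\begin{equation*}
\mu_0=\Big(1+(1-\alpha)\tfrac{1-\beta^{n^*_\lambda-1}}{1-\beta}+\tfrac{(1-\alpha)\beta^{n^*_\lambda-1}}{1-a}\Big)^{-1},
\end{equation*}
and then $C_{\pi^*_\lambda}=\sum_{S\ge n^*_\lambda}\mu_S=\tfrac{(1-\alpha)\beta^{n^*_\lambda-1}}{1-a}\mu_0$. A short algebraic rearrangement—using $\tfrac{1-\beta^{n^*_\lambda-1}}{1-\beta}+\beta^{n^*_\lambda-1}=\tfrac{1-\beta^{n^*_\lambda}}{1-\beta}$ and $\tfrac{\beta^{n^*_\lambda-1}}{1-a}=\beta^{n^*_\lambda-1}+\tfrac{a\beta^{n^*_\lambda-1}}{1-a}$ inside the normalizing constant—turns this into the exact expression claimed in the proposition.

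Finally I would dispatch the boundary cases. If $n^*_\lambda=0$ the policy transmits in every slot, so $C_{\pi^*_\lambda}=1$ trivially. For the bounded case with an interior threshold the only change is at $S=S_{\textnormal{thresh}}$, where the transmitting dynamics produce a self-loop of probability $a$ rather than a move to $S_{\textnormal{thresh}}+1$; solving $\mu_{S_{\textnormal{thresh}}}(1-a)=a\,\mu_{S_{\textnormal{thresh}}-1}$ one checks that $\sum_{S\ge n^*_\lambda}\mu_S$ telescopes to the very same $\tfrac{(1-\alpha)\beta^{n^*_\lambda-1}}{1-a}\mu_0$, so the unbounded formula carries over verbatim; and if $n^*_\lambda=S_{\textnormal{thresh}}+1$ the policy never transmits and $C_{\pi^*_\lambda}=0$. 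The only genuinely delicate point is the first step—justifying that the $\limsup$ Cesàro average in the definition of $C_\pi$ equals the stationary expectation, i.e. invoking the ergodic theorem for this possibly countable-state chain and verifying positive recurrence; everything afterward is bookkeeping with geometric series.
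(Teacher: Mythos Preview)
Your proposal is correct and follows essentially the same approach as the paper: identify the threshold-induced DTMC, solve the balance equations to obtain the geometric stationary distribution, sum the tail $\sum_{S\ge n^*_\lambda}\mu_S$, and handle the boundary cases $n^*_\lambda=0$ and $n^*_\lambda=S_{\textnormal{thresh}}+1$ separately. If anything, you are more careful than the paper in explicitly justifying positive recurrence and the passage from the $\limsup$ Cesàro average to the stationary expectation, and in spelling out the algebraic identity that reconciles your normalizing constant with the form stated in the proposition.
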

\begin{proof}
The proof is in Appendix \ref{proofpropositionupdaterate}.
\end{proof}
\subsection{Optimal Policy}
Thus far, we have focused on finding the optimal transmission policy $\pi^*_\lambda$ that solves problem (\ref{secondobjective}), which it turns solves (\ref{firstobjective}). However, our primary goal remains to optimally solve the original constrained problem reported in (\ref{originalobjective}). It turns out, we can relate the optimal policy for the constrained problem to that of (\ref{firstobjective}) if certain conditions are satisfied. To that end, let us first define $\lambda^*\triangleq\inf\{\lambda\in\mathbb{R}^+: C_{\pi^*_{\lambda^*}}\leq\delta\}$ and $\vartheta=\frac{1-\alpha}{2-\alpha-a}$. With these definitions in mind, we summarize our findings in the following theorem.
\begin{theorem}[Optimal Policy of the Constrained Problem]
The optimal transmission policy of problem (\ref{originalobjective}) can be summarized as follows
\begin{itemize}
\item Unbounded $f(\cdot)$: the optimal transmission policy 
is a randomized threshold policy with parameter $\mu^*$ such that
\begin{itemize}
\item The thresholds $n^*_{\lambda^*}-1$ and $n^*_{\lambda^*}$ are adopted with probability $\mu^*$ and $1-\mu^*$ respectively.
\item $\mu^*$ is chosen to ensure that the randomized policy has an average update rate equal to $\delta$. In other words, 
\begin{equation}
\mu^*=\frac{\delta-C_{\pi^*_{\lambda^*,2}}}{C_{\pi^*_{\lambda^*,1}}-C_{\pi^*_{\lambda^*,2}}},
\end{equation}
where $C_{\pi^*_{\lambda^*,1}}$ and $C_{\pi^*_{\lambda^*,2}}$ are the average update rate when the thresholds $n^*_{\lambda^*}-1$ and $n^*_{\lambda^*}$ are used respectively. 
\end{itemize}
\item Bounded $f(\cdot)$: the optimal transmission policy coincides with the unbounded function case if $\delta<\vartheta$. Otherwise, an optimal transmission policy is to transmit a packet in every timeslot $t$ where $S_t\neq0$.
\end{itemize}
\label{theoremrelatingtoconstrained}
\end{theorem}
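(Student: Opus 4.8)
The plan is to treat (\ref{originalobjective}) with the standard Lagrangian/saddle‑point machinery for constrained MDPs: I will exhibit a multiplier $\lambda^*$ and a (randomized) stationary policy $\tilde\pi$ that \emph{simultaneously} minimizes the Lagrangian $h(\lambda^*,\cdot)$ over $\Pi$ and meets the constraint with equality, and then show these two facts already force $\tilde\pi$ to be optimal. The ``force'' step is weak duality: for every feasible $\pi$ (i.e.\ $C_\pi(S_0)\le\delta$) and every $\lambda\ge0$, subadditivity of $\limsup$ applied to $f(S_t)+\lambda\psi_t$ together with $C_\pi(S_0)\le\delta$ gives
\begin{equation*}
J_\pi(S_0)\;\ge\; h(\lambda,\pi)-\lambda\,C_\pi(S_0)\;\ge\; h(\lambda,\pi)-\lambda\delta\;\ge\;\Big(\inf_{\pi'\in\Pi}h(\lambda,\pi')\Big)-\lambda\delta .
\end{equation*}
Hence if $\tilde\pi$ is stationary with $h(\lambda^*,\tilde\pi)=\inf_{\pi'}h(\lambda^*,\pi')$ and $C_{\tilde\pi}(S_0)=\delta$, then $J_{\tilde\pi}(S_0)=h(\lambda^*,\tilde\pi)-\lambda^*\delta$ attains the above lower bound, so $\tilde\pi$ minimizes $J$ over all feasible policies and is itself feasible. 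The candidate $\tilde\pi$ is exactly the mixture in the statement.

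It then remains to supply the two ingredients. For the \textbf{Lagrangian optimality} of $\tilde\pi$: by Proposition~\ref{thresholdproposition} and Theorem~\ref{maintheorem}, for each $\lambda$ the threshold policy with threshold $n^*_\lambda$ solves (\ref{secondobjective}), and the monotonicity of $H(\cdot)$ (resp.\ $H'(\cdot)$) in $\lambda$ shows $n^*_\lambda$ is non‑decreasing in $\lambda$, so at the breakpoint $\lambda^*$ it jumps from $n^*_{\lambda^*}-1$ to $n^*_{\lambda^*}$. At that $\lambda^*$, the minimum in the Bellman equation (\ref{generalbellmanequation}) is attained by \emph{both} actions at the single indifference state, so the two threshold policies with thresholds $n^*_{\lambda^*}-1$ and $n^*_{\lambda^*}$ are both optimal for $h(\lambda^*,\cdot)$. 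The policy $\tilde\pi$ that picks one of these thresholds at $t=0$ with probabilities $\mu^*$ and $1-\mu^*$ and keeps it has stationary occupation measure equal to the corresponding convex combination; since the average Lagrangian cost (and the average update rate) is linear in the occupation measure, $h(\lambda^*,\tilde\pi)=\mu^*\theta_{\lambda^*}+(1-\mu^*)\theta_{\lambda^*}=\theta_{\lambda^*}$ and $C_{\tilde\pi}(S_0)=\mu^*C_{\pi^*_{\lambda^*,1}}+(1-\mu^*)C_{\pi^*_{\lambda^*,2}}$. For \textbf{constraint tightness}: by Proposition~\ref{propositionupdaterate}, $\lambda\mapsto C_{\pi^*_\lambda}$ is piecewise constant and non‑increasing (the displayed rate is decreasing in the threshold and $n^*_\lambda$ is non‑decreasing) with $C_{\pi^*_\lambda}\to0$ as $\lambda\to\infty$; hence $\lambda^*=\inf\{\lambda:C_{\pi^*_\lambda}\le\delta\}$ is finite and the rate jumps \emph{down} across $\lambda^*$ from $C_{\pi^*_{\lambda^*,1}}\ge\delta$ to $C_{\pi^*_{\lambda^*,2}}\le\delta$. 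The affine map $\mu\mapsto\mu C_{\pi^*_{\lambda^*,1}}+(1-\mu)C_{\pi^*_{\lambda^*,2}}$ therefore sweeps $[C_{\pi^*_{\lambda^*,2}},C_{\pi^*_{\lambda^*,1}}]\ni\delta$, so the stated $\mu^*$ lies in $[0,1]$ and makes $C_{\tilde\pi}(S_0)=\delta$. (If $C_{\pi^*_0}\le\delta$ already, the constraint is inactive, $\lambda^*=0$, and $\tilde\pi$ collapses to a pure threshold policy.) This proves the unbounded‑$f$ part.

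For \textbf{bounded} $f$, the state space is the finite truncation $\mathbb{S}=\{0,\dots,S_{\textnormal{thresh}}\}$ and the argument above applies unchanged whenever the constraint is active (at the relevant breakpoint the ``never transmit'' policy of Theorem~\ref{maintheorem} may serve as one of the two mixed policies). The only extra case is $\delta\ge\vartheta$: a coupling argument shows that the policy ``transmit whenever $S_t\ne0$'' keeps $S_t$ stochastically below the state process of any other policy (a transmission at $S_t=0$ leaves the dynamics unchanged, while at $S_t\ne0$ it maximizes the probability of the down‑jump to $0$), and since $f$ is non‑decreasing this policy minimizes $J_\pi$ over \emph{all} of $\Pi$; plugging threshold $1$ into Proposition~\ref{propositionupdaterate} gives its update rate as $\vartheta=\frac{1-\alpha}{2-\alpha-a}$. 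Thus when $\delta\ge\vartheta$ this unconstrained optimum is feasible, hence optimal for (\ref{originalobjective}) — exactly the stated policy.

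The hard part will be the infinite‑state, unbounded‑cost incarnation of the weak‑duality step: one must guarantee that a bounded solution $(\theta,V)$ of (\ref{generalbellmanequation}) exists and that the stationary randomized policy $\tilde\pi$ is genuinely average‑cost optimal over the full class $\Pi$ of causal policies (not merely among stationary ones), which is precisely where the summability condition (\ref{conditiononf}) and the non‑decreasing property of $V(\cdot)$ from Lemma~\ref{nondecreasingproperty} are used to control the tails. A secondary, more mechanical obstacle is pinning down that $\lambda\mapsto C_{\pi^*_\lambda}$ is non‑increasing with a single downward jump straddling $\delta$, which reduces to the monotonicity of $n^*_\lambda$ in $\lambda$ read off from Theorem~\ref{maintheorem} together with Proposition~\ref{propositionupdaterate}. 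Once these are in place, the affine dependence on $\mu^*$, complementary slackness, and the coupling argument in the bounded case are routine.
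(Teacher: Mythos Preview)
Your argument is correct in outline but takes a genuinely different route from the paper. The paper does not carry out the weak‑duality/complementary‑slackness construction you describe; instead it \emph{outsources} the entire CMDP step to existing theory. For unbounded $f$ it verifies the five assumptions of Sennott~\cite{sennott1993} (growth of $f$, existence of a reference policy with finite cost, accessibility between states, uniqueness of the recurrent class, and a Slater‑type feasible policy) and then cites Theorem~2.5/Proposition~3.2/Lemmas~3.4 and~3.9 there to conclude that the constrained optimum is a mixture of two policies that are both Lagrangian‑optimal at the critical $\lambda^*$ and differ in at most one state. For bounded $f$ it checks Hypotheses~2.2 and~4.1 of Beutler--Ross~\cite{BEUTLER1985236} and invokes their Theorem~4.4 to the same effect; the case $\delta\ge\vartheta$ is handled not by coupling but by inspecting the Bellman equation at $S=0$ (both actions give the same right‑hand side) to note that the \emph{unconstrained} optimum already has rate $\vartheta$ and is therefore feasible.

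The trade‑off is clear. The paper's approach is shorter because it packages exactly the ``hard part'' you flag---that a solution of (\ref{generalbellmanequation}) with the right growth exists and is average‑cost optimal over \emph{all} causal policies, not just stationary ones---into the cited references; you would still have to supply (or cite) this to close your argument. Conversely, your route is more transparent: it shows \emph{why} the randomized threshold works (complementary slackness at the indifference state), makes explicit that $n^*_\lambda$ is integer‑valued and non‑decreasing so that $C_{\pi^*_\lambda}$ is piecewise constant and brackets $\delta$ at $\lambda^*$, and your stochastic‑domination argument for $\delta\ge\vartheta$ is arguably cleaner than reading it off the Bellman equation. Both arrive at the same statement; your version is more self‑contained once the Bellman‑existence lemma is in hand, while the paper's is a verification‑plus‑citation proof.
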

\begin{proof}
The proof is in Appendix \ref{prooftheoremrelatingtoconstrained}.
\end{proof}

\subsection{Algorithm Implementation}
\label{algoimplementation}
To obtain the optimal policy, we implement a specific low-complexity algorithm, as explained below. The first step in our algorithm implementation consists of finding the optimal threshold for any fixed $\lambda\in\mathbb{R}^+$. To that end, we recall from our analysis in the proof of Theorem \ref{maintheorem}, the functions $H(n)$ and $H'(n)$ are both non-decreasing with $n$. Accordingly, we can use the binary search algorithm \cite{10.5555/1614191} to find the optimal threshold for any $\lambda$. Specifically, starting from an initial interval $I=[1,2]$, we exponentially enlarge this interval as long as $n^*_{\lambda}\not\in I$. When the interval is large enough to contain $n^*_{\lambda}$, a binary search algorithm is adopted to find it.
Interestingly, this whole procedure is computationally efficient as it requires at most ${\displaystyle O(\log n^*_{\lambda})}$ iterations.\color{black}
\begin{remark}[Implementation Considerations] The evaluation of $H(n)$ requires the calculation of an infinite sum series. However, given the assumption in (\ref{conditiononf}), we have
\begin{equation}
\forall \epsilon>0,\:\exists M: \forall m\geq M, |\sum\limits_{k=n}^{m}f(k)a^{k-n}-\sum\limits_{k=n}^{+\infty}f(k)a^{k-n}|<\epsilon.
\end{equation}
Accordingly, we can always consider a finite sum satisfying a predefined precision criterion. 
\end{remark}
Next, we derive a scheme to find the optimal Lagrange multiplier $\lambda^*$. To that end, we note that the average update rate $C_{\pi^*_{\lambda}}$ is non-increasing with $\lambda$ \cite{sennott1993,BEUTLER1985236} and that $C_{\pi^*_{0}}=1$. Accordingly, we employ a bisection search method to find $\lambda^*$\cite{10.5555/1614191}, which also has low complexity. Specifically, as it was done for the binary search algorithm, we start with an initial interval $I_0=[\lambda^0_{\textnormal{min}},\lambda^0_{\textnormal{max}}]$ where $\lambda^0_{\textnormal{min}}=0$ and $\lambda^0_{\textnormal{max}}=1$. As long as $C_{\pi^*_{\lambda^t_{\textnormal{max}}}}>\delta$, we set $\lambda^{t+1}_{\textnormal{min}}=\lambda^t_{\textnormal{max}}$ and $\lambda^{t+1}_{\textnormal{max}}=2\lambda^t_{\textnormal{max}}$. We do so until we end up with an interval $I_{t^*}=[\lambda^{t^*}_{\textnormal{min}},\lambda^{t^*}_{\textnormal{max}}]$ such that 
$C_{\pi^*_{\lambda^{t^*}_{\textnormal{min}}}}>\delta$ and $C_{\pi^*_{\lambda^{t^*}_{\textnormal{max}}}}\leq\delta$ for some $t^*\geq0$. This interval expansion finishes in at most ${\displaystyle O(\log\lambda^*)}$ steps. Given that at each step of the expansion an optimal threshold needs to be found, the overall complexity of this step is hence ${\displaystyle O(\log(\lambda^*)\times[\max_{0\leq t\leq \log(\lambda^*) } \log_2(n^*_{\lambda^t_{\textnormal{max}}})])}.$ \color{black}
The next step consists of evaluating the middle point of the interval $\xi_t=\frac{\lambda^t_{\textnormal{min}}+\lambda^t_{\textnormal{max}}}{2}$. If $C_{\pi^*_{\xi_t}}>\delta$, then we set $I_{t+1}$ to $[\xi_t,\lambda^t_{\textnormal{max}}]$. Otherwise, we set it to $[\lambda^t_{\textnormal{min}},\xi_t]$. We keep doing this until a convergence criterion is satisfied and the algorithm outputs $\xi_{\infty}$. The complexity of this step is ${\displaystyle O (\log(\frac{\lambda^{t^*}_{\textnormal{max}}-\lambda^{t^*}_{\textnormal{min}}}{\epsilon})\times[\max_{t\leq \log(\frac{\lambda^{t^*}_{\textnormal{max}}-\lambda^{t^*}_{\textnormal{min}}}{\epsilon})} \log(n^*_{\xi_t})])}$, where $\epsilon$ is the convergence tolerance. \color{black}Consequently, to get the optimal transmission policy, it is sufficient to set $n^*_{\lambda^*}$ of Theorem \ref{theoremrelatingtoconstrained} to $n^*_{\xi_{\infty}}$. Finally, the randomization parameter $\mu^*$ can be easily concluded using the resulting $n^*_{\lambda^*}$ that we adopt. A pseudo-code of the algorithm is reported in Appendix \ref{algorithmpseudocode}. 
\section{Comparison with the Error Framework}
\label{comparisonwitherror}
Given that the proposed AoII framework incorporates both an information aspect (through the function $g(\cdot,\cdot)$) and a time-aspect (through the function $f(\cdot)$), an interesting question is how such a framework compares to the standard error-based measure approach?\color{black}. This section answers this question by comparing the performance of both the error-optimal policy and the AoII-optimal policies. Interestingly, we can obtain the error-optimal transmission policy $\pi^*_e$ by adopting the following function $f_{\textnormal{error}}(S_t)=1$ if $S_t\neq0$ and $f(0)=0$, and applying Theorem \ref{theoremrelatingtoconstrained}. By doing so, we minimize the long-term average of the error measure $d_t$ depicted in Section \ref{modellll}. Let us now consider a simple setting where the communication goal can be written as $f_1(S_t)=S_t$, and let $\pi^*_a$ denote the corresponding AoII-optimal policy. Moreover, suppose that $\alpha=0.2$, $\beta=0.9$ and $p_s=0.8$. We compare the two policies in terms of average AoII (i.e., communication goal utility) and average error below. We also report the optimal thresholds for each policy. 
\begin{center}
\begin{tabular}{|c|c|c|c|c|c|c|}
 \hline
 $\delta$ & $\text{AoII}_{\pi^*_a}$ & $\text{AoII}_{\pi^*_e}$  & $\text{Error}_{\pi^*_a}$ & $\text{Error}_{\pi^*_e}$ & $n_{\pi^*_a}^*$  & $n_{\pi^*_e}^*$\\
  \hline
 $0.05$  & $4.5$ & $8.1$ & $0.85$ & $0.85$  & $13$ & $1$ \\
 $0.1$  & $3.1$ & $7.4$ & $0.8$ & $0.8$ & $8$ & $1$\\
 $0.4$  & $1$ & $2.5$ & $0.6$ & $0.6$ & $2$ & $1$\\
 \hline
\end{tabular}
 \captionof{table}{Performance comparison between $\pi^*_a$ and $\pi^*_e$.}
 \label{RRCSMAasdatryy}
\end{center}
\color{black}
Interestingly, the AoII-optimal policy achieves the same error performance as the error-optimal policy (i.e., the AoII-optimal policy is also error-optimal). On the other hand, the error-optimal approach is not AoII-optimal, as seen by the two policies' performance gap. 
This was first observed numerically in the work of Clement et al. \cite{9162726} and our work here provides a rigorous understanding of this phenomena. To see this more clearly, we recall that $f_{\textnormal{error}}(\cdot)$ is bounded. With this in mind, we distinguish between two cases
\begin{itemize}
\item $\delta>\vartheta=\frac{1-\alpha}{2-\alpha-a}$: In this case, the error-optimal policy is to send an update whenever $S_t\neq0$ (Theorem $2$). In other words, the update rate is not restrictive, and one can send an update whenever $S_t\neq0$ without violating it. It is easy to see that the error-optimal and AoII-optimal policies will coincide in this case.
\item $\delta<\vartheta=\frac{1-\alpha}{2-\alpha-a}$: In this more interesting case, we can see that the error-optimal policy consists in sending a packet whenever $S_t\neq0$ with a probability $1-\mu^*$. By doing so, the update rate constrained is satisfied with equality. On the other hand, the AoII-optimal policy is more elaborate than this. In fact, the AoII-optimal policy will depend on the instantaneous value of $S_t$, not on just whether or not it is equal to $0$ or $1$. Specifically, as seen in Theorem $2$, the AoII-optimal policy will alternate between two thresholds $n^*_{\lambda^*}-1$ and $n^*_{\lambda^*}$ in a way to satisfy the update rate with equality. Given that the AoII-optimal policy will only send packets when $S_t\neq0$ while satisfying the update rate constraint with equality, we can conclude that it is also error-optimal.
\end{itemize}
Therefore, one can see that the AoII-optimal policy is more elaborate and that it is error-optimal. With that in mind, we can lay out the following conclusion. 
\begin{conclusion}
Adopting AoII-optimal policies minimizes the average error while also helping achieve the communication's goal. On the contrary, the converse is not necessarily true.
\end{conclusion}
\color{black}
\section{Numerical Results}
\begin{figure*}[ht]
\centering
\begin{subfigure}{0.33\textwidth}
  \centering
  \includegraphics[width=.9\linewidth]{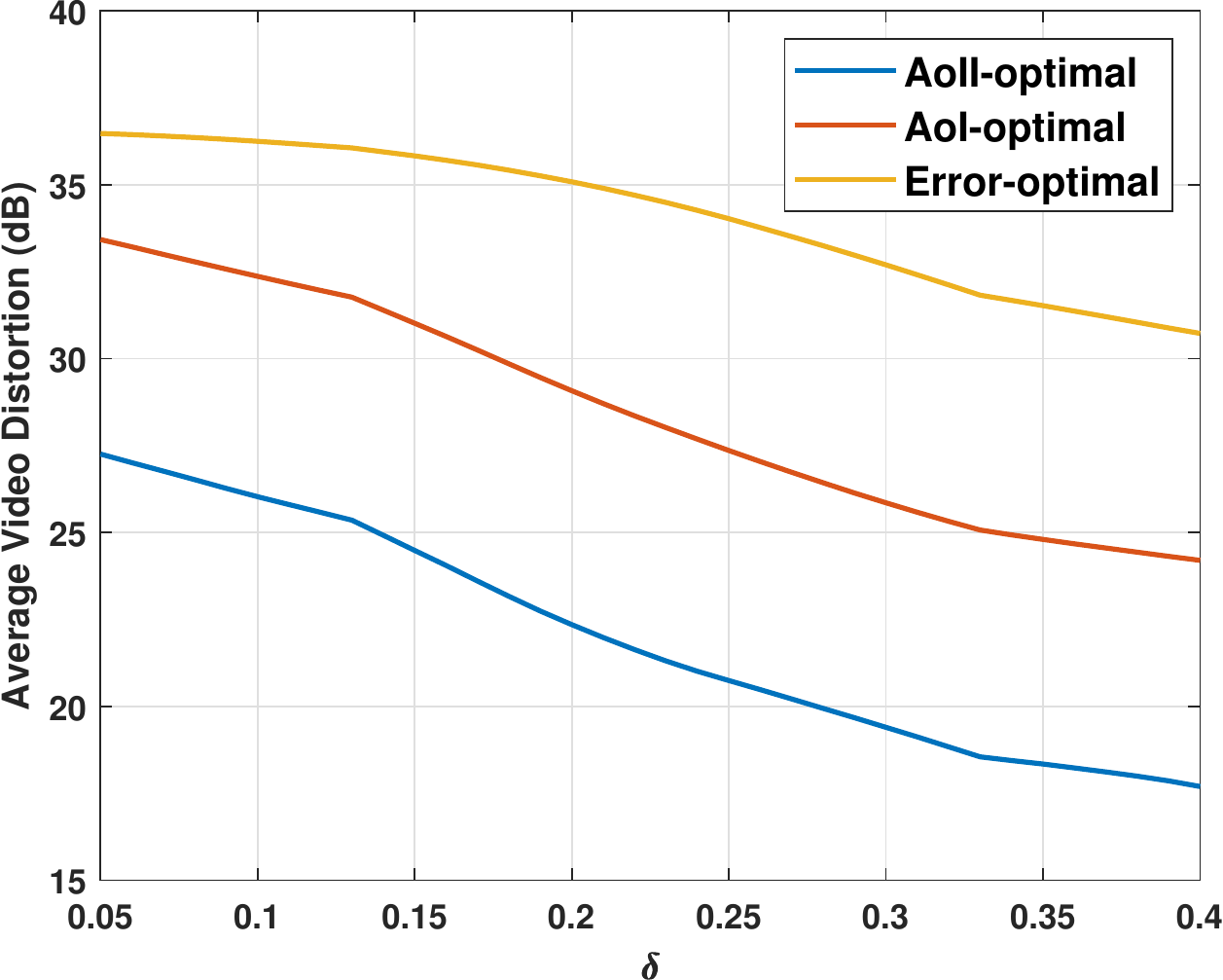}
  \caption{Video streaming.}
    \label{videocodingsimulations}
\end{subfigure}%
\begin{subfigure}{0.33\textwidth}
\centering
  \includegraphics[width=.9\linewidth]{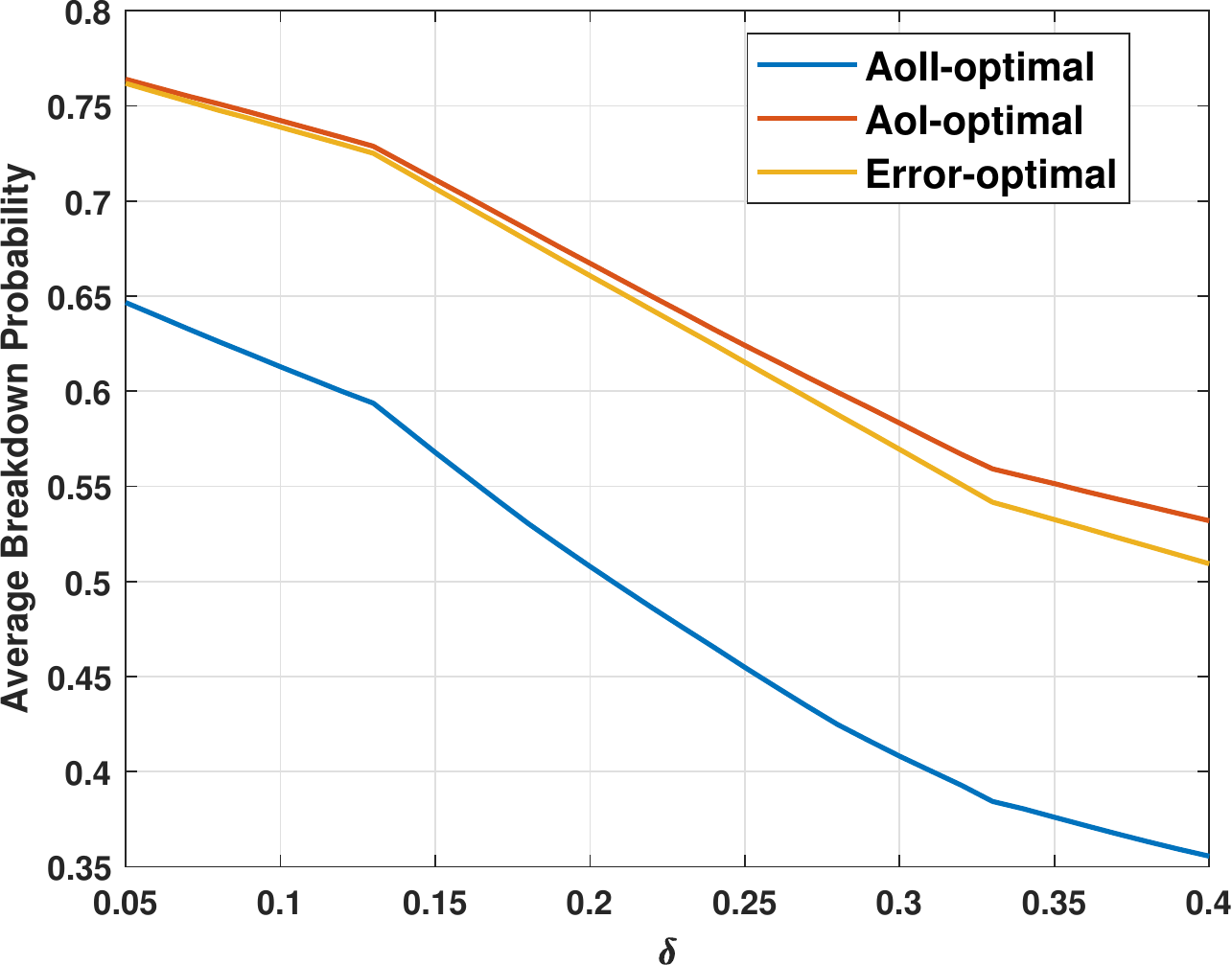}
  \caption{Machine overheating.}
\label{machinesimulations}
\end{subfigure}%
\begin{subfigure}{0.33\textwidth}
\centering
  \includegraphics[width=.9\linewidth]{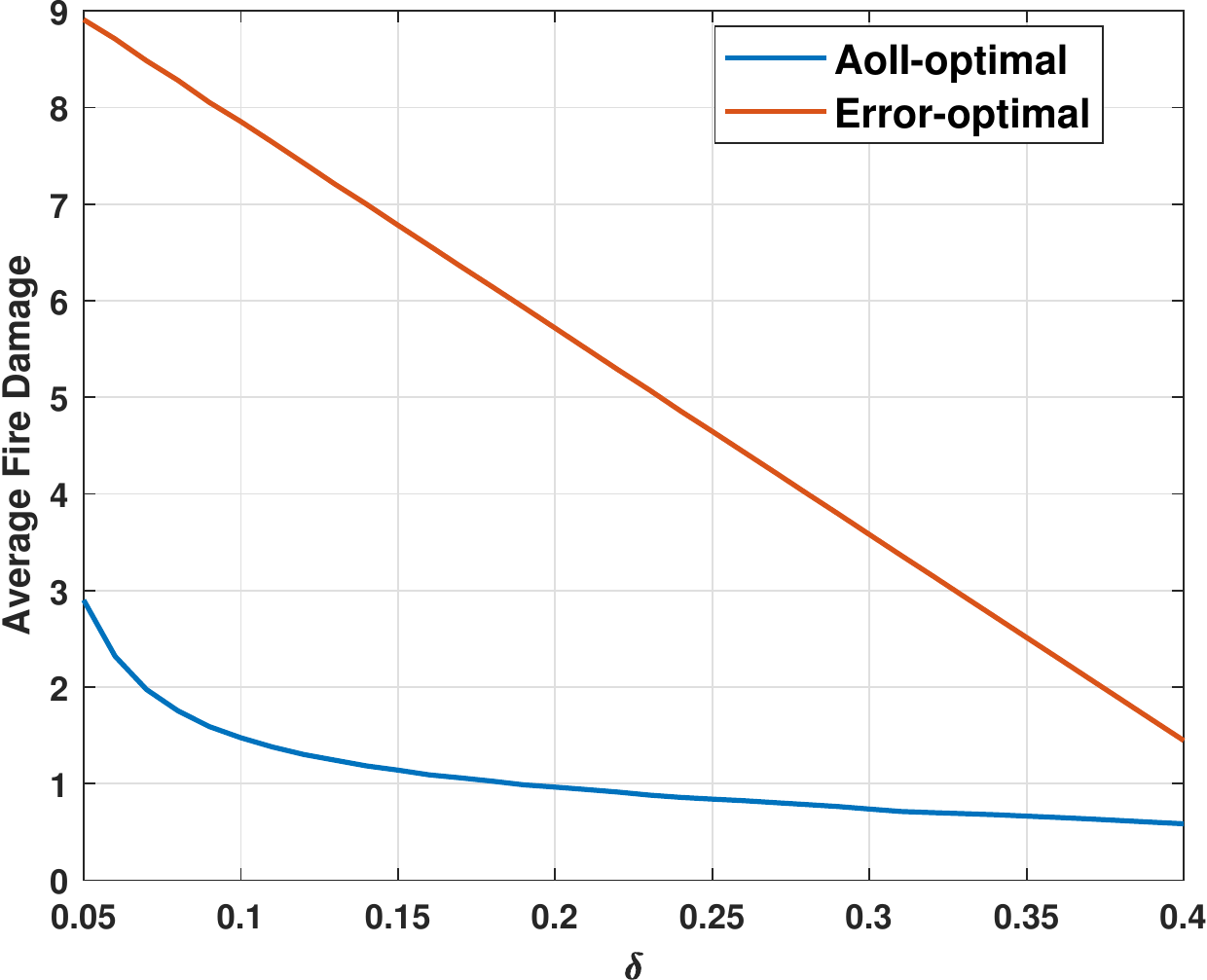}
  \caption{Fire monitoring.}
\label{firestuff}
\end{subfigure}%
\vspace{-5pt}
\caption{Illustrations of different simulations.}
\vspace{-5pt}
\label{simulationss}
\end{figure*}
In this section, we provide real-life applications of the AoII and compare the performance between the AoII-optimal, the AoI-optimal \cite{8377368}, and the error-optimal schemes. 
\label{numericalresults}
\subsection{Video Streaming}
We consider a transmitter-receiver pair where real-time video stream packets encoded using the standard MPEG-4 AVC (Advanced Video Coding) scheme are sent from one end to the other \color{black}. Time is slotted and normalized to the slot duration (i.e., the slot duration is taken as $1$). The video stream comprises frames, each of which is a 1-D vector of length $M$ in line-scan order. The stream's total duration is $T$ timeslots. At each timeslot, a frame of the video stream is sent by the transmitter side. We suppose that the channel at timeslot $t$ is $X_t$, and its estimate at the transmitter's side is $\hat{X}_t$. The transmitter can send pilot signals and learn the channel at the beginning of each timeslot. However, this training succeeds with a probability $0<p_s<1$ and incurs a cost, knowing that an average cost budget $\delta$ cannot be surpassed. As explained in Section \ref{modellll}, by adopting a Markovian channel model, it can be shown that without channel learning, the process $d_t$ can be modeled using a Markov chain. We suppose that this chain's parameters are $\alpha$ and $\beta$ as previously depicted in earlier sections. We assume that the receiver successfully decodes packets if $d_t=0$ and a transmission error occurs otherwise. At the receiver, we assume a simple loss concealment scheme where the lost frame due to a transmission error is replaced by the previous frame. The error propagation process is modeled with a geometric attenuation factor resulting from spatial filtering. Let us assume that each error introduces an initial error power $\gamma$, and the cross-correlation factor between each successive error is $\rho$. By following the derivations in \cite{4490280}, we can show that the video distortion model provided in \cite{4490280} is a special case of the AoII where $f(0)=0$ and for any $S_t=S>0$, we have \color{black}
\begin{equation}
f(S)=\gamma S(\alpha_0+(S-1)(\tau+\rho(S-1)+c\rho(S-2))),
\end{equation}
where $\tau=1+\alpha_0\rho+c$, and $(\alpha_0,c)$ are two parameters of the video stream. It is important to note that the \textbf{channel training goal} is to minimize the total average distortion of the receiver's video signal. We are not interested in having fresh estimates of $X_t$ (AoI metric) or minimizing the channel prediction error (standard error metric). Therefore, we can see how by tweaking the function $f(.)$, the AoII allows us to capture the channel training's goal. To highlight our AoII approach's benefits, we compare it to the AoI and the standard error-based frameworks for this particular scenario. Specifically, we evaluate the average video distortion resulting from adopting the optimal policies for these different metrics. We consider $\alpha=0.5$, $\beta=0.8$, $p_s=0.8$, $T=10^6$, $\rho=0.8$, $c=2$, $\gamma=1$, and $\alpha_0=4$. As seen in Fig. \ref{videocodingsimulations}, the AoII-optimal policy outperforms the two other policies for any $\delta$.
\subsection{Machine Overheating}
In this scenario, we assume that a transmitter informs a remote monitor about whether or not the monitored electrical machine is overheating. An abnormal increase in temperature in electrical devices creates thermal stress on the machine, leading to the breakdown of the electrical insulation (e.g., motor winding insulation). This itself will lead to an eventual malfunction of the machine. Therefore, the transmitter needs to inform the monitor of the temperature's status and solicit instructions to minimize the probability of the machine malfunction. We suppose that the transmitter is limited on how often it can update the monitor. The average update rate allowed is $\delta$. We consider that $d_t=1$ when the machine is overheating at time $t$ and $d_t=0$ otherwise. We also assume that $d_t$ evolves as a Markov chain, and its parameters are $\alpha$ and $\beta$. Following the study in \cite{feilat}, the probability of an insulation breakdown under temperature stress follows a Weibull distribution. Precisely,
\begin{equation}
\Pr(\textnormal{Breakdown time}<t)=1-\exp(-(t/\gamma)^{\rho}),
\end{equation}
where $\gamma$ and $\rho$ are parameters that depend on the machine's characteristics. When there is no temperature stress, the breakdown probability is negligible. The \textbf{communication goal} is to choose the update times such that the probability of a breakdown since the stress was applied is minimized. We can see that this probability is a special case of the AoII where $f(0)=0$, and for any $S_t=S>0$, we have $f(S)=1-\exp(-(S/\gamma)^{\rho})$. We evaluate the average breakdown probability that results from adopting the optimal policies for the standard three metrics. We consider $\alpha=0.2$, $\beta=0.9$, $p_s=0.8$, $\rho=1$, and $\gamma=1$. As seen in Fig. \ref{machinesimulations}, the AoII-optimal policy outperforms the two other policies for any $\delta$.
\subsection{Fire Monitoring}
Contrary to the previous cases, we consider an application outside the scope of traditional communication networks. Specifically, we consider a scenario where fires happen independently and fire stations have to respond to them. Accordingly, this application falls under the decision problems umbrella. As found by the UK fire research station, the spread of fire can be represented through an exponential statistical model \cite{ramachandran1986exponential}. Specifically, 
\begin{equation}
F(t)=\min\{F_{\textnormal{max}};F_{\textnormal{init}}\exp(\gamma(t-t_{\textnormal{fire}}))\},
\label{equationfire}
\end{equation}
where $F(t)$ is the amount of fire damage at time $t$ since ignition, $F_{\textnormal{init}}$ is the initial ignite damage, $\gamma$ is the fire growth parameter, $F_{\textnormal{max}}$ is the maximum possible damage, and $t_{\textnormal{fire}}$ is the ignition time. Given the restricted resources, the fire stations are limited on how often they can respond to fires, as an average response rate of $\delta$ cannot be surpassed. We consider that $d_t=1$ when a fire is happening at time $t$ and $d_t=0$ otherwise. We also assume that $d_t$ evolves as a Markov chain, and its parameters are $\alpha$ and $\beta$ such that $\beta=1$. The \textbf{goal} is to minimize the total average fire damage. Using (\ref{equationfire}), we can see that the fire damage is a special case of the AoII where $f(0)=0$, and for any $S_t=S>0$, we have $f(S)=\min\{F_{\textnormal{max}};F_{0}\exp(\gamma S)\}$. We evaluate the average fire damage that results from adopting the AoII-optimal and the error-optimal policies. We consider $\alpha=0.2$, $p_s=1$, $F_{\textnormal{max}}=10$, $\gamma=0.1$, and $F_{\textnormal{init}}=1$. As seen in Fig. \ref{firestuff}, the AoII-optimal policy outperforms the error approach for any $\delta$. This example shows that the AoII is not restricted to communication networks and can be utilized in various other frameworks. 
\vspace{-7pt}
\section{Conclusion and Future Work}
\label{conclusionsss}
In this paper, we have shown how the AoII metric enables semantics-empowered communication, where the communication's goal is taken into account. Additionally, we have developed an optimal transmission policy that minimizes the AoII, and we showcased its substantial performance advantages. Future research directions include the extension to more general information source models, examining continuous-time systems, investigating multi-user scenarios, and providing even a broader range of real-life applications of the AoII. Additionally, the implementation of the AoII in real-life environments and addressing the challenges that arise in deriving the optimal sampling policy and the estimation of the penalty functions $f(\cdot)$ and $g(\cdot,\cdot)$) through ML techniques, in this case, are to be considered. 
\color{black}. 
\vspace{-8pt}
\bibliographystyle{IEEEtran}
\bibliography{trialout}
\appendices

\section{Proof of Lemma \ref{nondecreasingproperty}}
\label{proofoflemmanondecreasing}
The first step consists of simplifying the Bellman equation. For the unbounded function class, and given the dynamics of the system reported in Section \ref{penaltydynamics}, we can rewrite the Bellman equation as follows
\begin{align}
\theta + V(0)=&\min\big\{f(0)+\alpha V(0)+(1-\alpha)V(1);f(0)+\lambda\nonumber\\
&+\alpha V(0)+(1-\alpha)V(1)\big\},\nonumber\\
\theta + V(S)=&\min\big\{f(S)+\beta V(S+1)+(1-\beta)V(0);f(S)\nonumber\\&+\lambda+aV(S+1)+(1-a)V(0)\big\}, \quad \forall S\in\mathbb{N}^*.
\label{bellmanequation}
\end{align}
Notice that the upper part of the minimization in (\ref{bellmanequation}) is associated with choosing $\psi=0$, i.e., letting the transmitter idle, and the lower part with $\psi=1$, i.e., initiating a transmission. 
To prove the desired results, we leverage the Relative Value Iteration Algorithm (\textbf{RVIA}) \cite{Bertsekas:2000:DPO:517430}. The RVIA is an iterative algorithm that calculates the differential cost-to-go function $V(S)$  of the Bellman equation reported in (\ref{generalbellmanequation}). To that end, and for any state $S\in\mathbb{N}$, let $V_t(S)$ designate the differential cost-to-go function estimate at iteration $t$. Also, let us denote by $T(V_t)(S)$ the mapping obtained by applying the right-hand side of the Bellman's equation
\begin{equation}
T(V_t)(S)=\min_{\psi\in\{0,1\}}\big\{f(S)+\lambda\psi+\sum_{S'\in\mathbb{N} }\Pr(S\rightarrow S'|\psi)V_t(S')\big\},
\end{equation}
where $\Pr(S\rightarrow S'|\psi)$ is the transition probability from state $S$ to $S'$ given the action $\psi$. Without loss of generality, we suppose that $V_0(S)=0$ for all states $S\in\mathbb{N}$ and we let $S=0$ be the reference point of the algorithm. With that in mind, the estimate of the differential cost-to-go function is updated as follows
\begin{equation}
V_{t+1}(S)=T(V_t)(S)-T(V_t)(0), \quad \forall S\in\mathbb{N}.
\end{equation}
Note that $V_t(0)=0$ holds for all iterations $t$. As stated in \cite[Proposition~3.1]{Bertsekas:2000:DPO:517430}, the above algorithm converges to the differential cost-to-go function $V(S)$ (i.e., $\lim_{t\to+\infty} V_{t}(S)=V(S), \:\:\forall S\in\mathbb{N}$). Accordingly, if we can show the non-decreasing property of $V_{t}(S)$ for any time $t\in\mathbb{N}$, then we can assert that this property also holds for the differential cost-to-go function. Therefore, our goal is to show that
\begin{equation}
V_{t}(S_2)\geq V_{t}(S_1), \quad \forall t\in\mathbb{N},\forall S_2\geq S_1>0.
\label{goalinduction}
\end{equation}
Note that we restrict our attention to non-zero states since $V_t(0)=0$ for any $t\in\mathbb{N}$. We prove the non-decreasing property reported in (\ref{goalinduction})  by induction. First, given that $V_0(S)=0$ for all states $S\in\mathbb{N}$, the above property holds for $t=0$. Next, we suppose that the property holds up till iteration $t>0$. By investigating eq. (\ref{bellmanequation}) for $S=0$, we can see that the optimal action is to stay idle. Therefore, we have
\begin{equation}
T(V_t)(0)=f(0)+(1-\alpha)V_{t}(1).
\end{equation}
Therefore, we can rewrite the update rule of the RVIA as
\begin{equation}
V_{t+1}(S)=T(V_t)(S)-f(0)-(1-\alpha)V_{t}(1), \quad \forall S\in\mathbb{N}^*.
\label{finalupdatervia}
\end{equation}
Next, given the system's dynamics reported in Section \ref{penaltydynamics}, we can conclude that
\begin{align}
T(V_t)(S_1)=&\min\big\{f(S_1)+\beta V_t(S_1+1); f(S_1)\nonumber\\&+\lambda+aV_t(S_1+1)\big\},\nonumber\\
T(V_t)(S_2)=&\min\big\{f(S_2)+\beta V_t(S_2+1);f(S_2)\nonumber\\&+\lambda+aV_t(S_2+1)\big\}.
\end{align}
Using the above equations, and by leveraging our assumption on $V_t(\cdot)$ and the non-decreasing property of $f(\cdot)$, we can deduce that $V_{t+1}(S_2)\geq V_{t+1}(S_1), \:\:\forall t\in\mathbb{N},\forall S_2\geq S_1>0$.

Concerning the bounded function case, we first note that the equations in (\ref{bellmanequation}) hold for any $S\in\mathbb{S}\setminus\{S_{\textnormal{thresh}}\}$. Moreover, we have 
\begin{align}
\theta + V(S_{\textnormal{thresh}})=&\min\big\{f(S_{\textnormal{thresh}})+\beta V(S_{\textnormal{thresh}})+(1-\beta)V(0);\nonumber\\&f(S_{\textnormal{thresh}})+\lambda+aV(S_{\textnormal{thresh}})+(1-a)V(0)\big\}.
\end{align}
By following the same analysis as the one done in the unbounded case, we can prove that $V(\cdot)$ is also non-decreasing in the bounded function case. This concludes our proof.
\section{Proof of Proposition \ref{thresholdproposition}}
\label{proofthresholdproposition}
Let us first focus on the unbounded function case. To establish the optimal policy of problem (\ref{secondobjective}), one has to recourse to solving the Bellman equation. However, without any knowledge of the optimal policy structure, deriving a closed-form expression of $V(\cdot)$ can be challenging. To address these challenges, we recall that the RVIA allows us to find the differential cost-to-go function iteratively. To that end, let us define $V^1_{t+1}(S)$ and $V^0_{t+1}(S)$ as the differential cost-to-go function estimate by the RVIA at iteration $t+1$ if the optimal action is $\psi=1$ and $\psi=0$ respectively. Given the RVIA update rule reported in (\ref{finalupdatervia}), we have 
\begin{equation}
V^1_{t+1}(S)=f(S)+\lambda+aV_t(S+1)-f(0)-(1-\alpha)V_{t}(1), \:\: \forall S\in\mathbb{N}^*,
\end{equation}
\begin{equation}
V^0_{t+1}(S)=f(S)+\beta V_t(S+1)-f(0)-(1-\alpha)V_{t}(1), \quad \forall S\in\mathbb{N}^*.
\end{equation}
Next, we let $\Delta V_{t+1}(S)=V^1_{t+1}(S)-V^0_{t+1}(S)$. Therefore, we have
\begin{equation}
\Delta V_{t+1}(S)=\lambda+(a-\beta)V_t(S+1), \quad \forall S\in\mathbb{N}^*.
\end{equation}
By definition, the sign of $\Delta V_{t+1}(S)$ allows us to conclude the optimal action that minimizes the Right Hand Side (\textbf{RHS}) of the update rule reported in (\ref{finalupdatervia}). For example, if $\Delta V_{t+1}(S)\geq 0$, then the minimum of the RHS in (\ref{finalupdatervia}) is achieved for $\psi=0$ and vice-versa. Note that, as we explained previously in Section \ref{penaltydynamics}, we have $a<\beta$. Moreover, we recall the 
results of Lemma \ref{nondecreasingproperty} where we have shown that $V_t(S+1)$ is a non-decreasing function of $S$ for all $t\in\mathbb{N}$. With these two things in mind, we can conclude that $\Delta V_{t+1}(S)$ is nothing but the sum of a non-negative constant $\lambda$, and a non-increasing negative function $(a-\beta)V_t(S+1)$. Knowing that the RVIA converges to the differential cost-to-go function $V(\cdot)$ when $t\rightarrow +\infty$, we can deduce that the optimal action is increasing with $S$ from $\psi=0$ to $\psi=1$. In other words, the difference $\Delta V(S)$ decreases with $S$, and at a certain point, it could change sign and becomes negative. When that happens, the action of transmitting becomes more beneficial than remaining idle. Therefore, we can conclude that the optimal transmission policy is of a threshold nature.

As for the bounded function case, the same analysis holds, and $\Delta V_{t+1}(S)$ is the sum of a non-negative constant $\lambda$ and a non-increasing negative function for any $S\in\mathbb{S}$. Accordingly, the difference $\Delta V(S)$ also decreases with $S$ and, at a certain point, it could change sign and become negative. When that happens, the action of transmitting becomes more beneficial than remaining idle. However, the subtle difference with the unbounded function case is that the sign's change might not happen. In this case, the optimal policy is to never transmit a packet. This is a natural consequence of the finite state space assumption resulting from the boundedness of the function. In fact, $\lambda$ can be significantly high that letting the system evolve on its own becomes optimal.
\section{Proof of Theorem \ref{maintheorem}}
\label{proofofmaintheorem}
\begin{table*}[ht]
\centering
\begin{tabular}{p{0.99\linewidth}}
\begin{equation*}
V(S)=\begin{cases}
0  & \text{if} \:\:  S=0 \\
\frac{-\theta'_n(1-\beta^{n-S})}{1-\beta}+\sum\limits_{k=1}^{n-S}f(n-k)\beta^{n-S-k}+\beta^{n-S}V(n) & \text{if} \:\: 1\leq S\leq n-1 \\
              \frac{-\theta'_n+\lambda}{1-a}+\sum\limits_{k=S}^{S_{\textnormal{thresh}}-1}f(S)a^{k-S}+\frac{a^{S_{\textnormal{thresh}}-S}f(S_{\textnormal{thresh}})}{1-a} & \text{if} \:\:  n\leq S\leq S_{\textnormal{thresh}}-1 \\
\frac{-\theta'_n+\lambda}{1-a}+ \frac{f(S_{\textnormal{thresh}})}{1-a} & \text{if} \:\:  S=S_{\textnormal{thresh}}
       \end{cases} 
\end{equation*}
\begin{equation*}
\theta'_n=\frac{\frac{f(0)}{1-\alpha}+\sum\limits_{k=1}^{n-1}f(n-k)\beta^{n-k-1}+\beta^{n-1}\sum\limits_{k=n}^{S_{\textnormal{thresh}}-1}f(k)a^{k-n}+\frac{\lambda \beta^{n-1}}{1-a}+\beta^{n-1}\frac{a^{S_{\textnormal{thresh}}-S}f(S_{\textnormal{thresh}})}{1-a}}{\frac{1}{1-\alpha}+\frac{1-\beta^{n-1}}{1-\beta}+\frac{\beta^{n-1}}{1-a}}
\end{equation*}
\\
\hline
\vspace{-5pt}
\caption{Expressions of $V(S)$ and $\theta'_n$ for the bounded function case.}
\label{allequations}
\end{tabular}
\vspace{-25pt}
\end{table*}
As always, we start by investigating the unbounded function case. Given that the optimal policy is a threshold policy, we can affirm that an integer value $n\in\mathbb{N}$ exists such that the optimal action is $\psi=1$ and $\psi=0$ when $S\geq n$ and $S<n$ respectively. With that in mind, and by utilizing the RHS of the Bellman equation in (\ref{bellmanequation}), we can conclude that
\begin{align}
f(S)+\beta V(S+1)+&(1-\beta)V(0)>f(S)+\lambda\nonumber\\&+aV(S+1)+(1-a)V(0), \quad \forall S\geq n.
\end{align}
Without loss of generality, we suppose in the sequel that $V(0)=0$. To that end, and by rearranging the above terms, the following condition for activity can be deduced
\begin{equation}
V(S+1)>\frac{\lambda}{\beta-a}.
\label{conditionactivity}
\end{equation} 
In other words, the optimal action is to transmit whenever the system is in a state $S$ that verifies the above condition. Given the threshold property of the optimal policy, the Bellman equation can be rewritten for any state $S\geq n$ as follows
\begin{equation}
V(S)=-\theta_n+\lambda+f(S)+aV(S+1), \quad \forall S\geq n.
\end{equation}
Note that we add the subscript $n$ to $\theta$ to indicate that the average cost $\theta$ results from adopting the threshold $n$. By following a forward induction, we obtain
\begin{equation}
V(S)=(-\theta_n+\lambda)(1+a+a^2+\ldots)+\sum\limits_{k=S}^{+\infty}f(k)a^{k-S}, \quad \forall S\geq n.
\label{largerthannvalue}
\end{equation}
Given that $a<1$, we can invoke the geometric series sum property to end up with
\begin{equation}
V(S)=\frac{-\theta_n+\lambda}{1-a}+\sum\limits_{k=S}^{+\infty}f(k)a^{k-S}, \quad \forall S\geq n.
\label{largerthannvalue}
\end{equation}
Given the above equation, we can particularly conclude that 
\begin{equation}
V(n)=\frac{-\theta_n+\lambda}{1-a}+\sum\limits_{k=n}^{+\infty}f(k)a^{k-n}.
\label{Vnexpression}
\end{equation}
Next, we investigate the case where the system is in a state $S<n$. For any state $S<n$, the optimal action is to remain idle. Hence, using the the Bellman equation, we obtain
\begin{equation}
V(S)=\begin{cases}
              -\theta_n+f(S)+\beta V(S+1), & \text{if} \:\:  1\leq S<n, \\
             
            \frac{-\theta_n+f(0)}{1-\alpha}+V(1), & \text{if} \:\: S=0.
       \end{cases} 
\label{smallerthannvalue}
\end{equation}
By following a backward induction, we wind up with the following identity for any $1\leq S<n$
\begin{equation}
V(S)=\frac{-\theta_n(1-\beta^{n-S})}{1-\beta}+\sum\limits_{k=1}^{n-S}f(n-k)\beta^{n-S-k}+\beta^{n-S}V(n).
\label{backwardinduction}
\end{equation}
Knowing that $V(0)=0$, and by using eq. (\ref{smallerthannvalue}), we get 
\begin{equation}
V(1)=\frac{\theta_n-f(0)}{1-\alpha}.
\label{V1expression}
\end{equation}
Using the expression of $V(n)$ in (\ref{Vnexpression}), and by replacing $S$ with $1$ in eq. (\ref{backwardinduction}) and equating it to $V(1)$ in (\ref{V1expression}), we end up with the following relationship between $\theta_n,\lambda$, and $n$
\begin{equation}
\theta_n=\frac{\frac{f(0)}{1-\alpha}+\sum\limits_{k=1}^{n-1}f(n-k)\beta^{n-k-1}+\beta^{n-1}\sum\limits_{k=n}^{+\infty}f(k)a^{k-n}+\frac{\lambda \beta^{n-1}}{1-a}}{\frac{1}{1-\alpha}+\frac{1-\beta^{n-1}}{1-\beta}+\frac{\beta^{n-1}}{1-a}}
\label{thetaexpression}
\end{equation}
This fundamental relationship will be pivotal to our subsequent analysis to find the threshold $n$. The next step revolves around deriving a criterion that will allow us find $n$. To that end, we recall the activity condition reported in (\ref{conditionactivity}). Given that $n$ is the threshold, we can assert that
\begin{equation}
V(n)\leq \frac{\lambda}{\beta-a}<V(n+1).
\label{conditionnn}
\end{equation}
Therefore, it is sufficient to find the value $n$ that verifies the above equation. This is however easier said than done as one has to prove the existence of such a solution. To proceed in this direction, we recall the results of Lemma \ref{nondecreasingproperty} where we have shown that $V(\cdot)$ is a non-decreasing function. With that in mind, we recall that the function $f(\cdot)$ is unbounded. Therefore, by leveraging the limit definition, we have 
\begin{equation}
\forall M>0,\:\exists S_0: \forall S\geq S_0,  \: f(S)>M.
\end{equation}
Using the above property of $f(\cdot)$ and the expression of $V(n)$ in (\ref{Vnexpression}), we can show that 
\begin{equation}
\forall M'>0,\:\exists S'_0: \forall S'\geq S'_0,  \: V(S')>M'.
\end{equation}
Therefore, a solution to eq. (\ref{conditionnn}) exists in this case. In particular, the optimal threshold is 
\begin{align}
n^*&=\sup\{n\in\mathbb{N}:V(n)\leq \frac{\lambda}{\beta-a}\}=\sup\{n\in\mathbb{N}:\nonumber\\
&\frac{-\theta_n(\beta-a)+\lambda(\beta-1)}{(1-a)(\beta-a)}+\sum\limits_{k=n}^{+\infty}f(k)a^{k-n}\leq0\}\nonumber\\
&=\inf\{n\in\mathbb{N}^*:H(n)>0\}-1,
\label{optimalthresholdexpression}
\end{align}
\begin{equation}
H(n)=\frac{-\theta_n(\beta-a)+\lambda(\beta-1)}{(1-a)(\beta-a)}+\sum\limits_{k=n}^{+\infty}f(k)a^{k-n}.
\end{equation}
To understand the intuition behind these results, we recall that $\lambda$ can be seen as a penalty paid for transmitting a packet. As $f(\cdot)$ is unbounded when $S\rightarrow+\infty$, we can deduce that no matter how high $\lambda$ is, transmitting a packet will \textbf{eventually} become the optimal action. 

Let us now investigate the bounded function case. To that end, similarly to the previous case, we suppose that the optimal threshold is equal to $n\in\mathbb{S}$. Following the same analysis above, we end up with the expressions of $V(.)$ and $\theta'_n$ reported in Table \ref{allequations}. Moreover, from the Bellman equation, we can conclude that the activity condition is
\begin{align}
&V(S+1)>\frac{\lambda}{\beta-a}, \quad \forall S\in\mathbb{S}\setminus\{S_{\textnormal{thresh}}\},\nonumber\\&V(S_{\textnormal{thresh}})>\frac{\lambda}{\beta-a}, \quad S=S_{\textnormal{thresh}}.
\label{conditionactivitybounded}
\end{align} 
Given the above condition, we can conclude that if it is optimal to transmit when $S=S_{\textnormal{thresh}}$, then it is also optimal to transmit when $S=S_{\textnormal{thresh}}-1$. Accordingly, we focus on $n$ being in the set $\mathbb{S}\setminus \{S_{\textnormal{thresh}}\}$. With the above activity condition in mind, the threshold $n\in\mathbb{S}\setminus \{S_{\textnormal{thresh}}\}$ is simply the first state that verifies $V(n+1)>\frac{\lambda}{\beta-a}$. In other words,
\begin{equation}
n^*=\inf\{n\in\mathbb{S}\setminus\{0\}:H'(n)>0\}-1,
\label{optimalthresholdbounded}
\end{equation}
\begin{align}
H'(n)=&\frac{-\theta'_n(\beta-a)+\lambda(\beta-1)}{(1-a)(\beta-a)}+\sum\limits_{k=n}^{S_{\textnormal{thresh}}-1}f(k)a^{k-n}+\nonumber\\&\frac{a^{S_{\textnormal{thresh}}-S}f(S_{\textnormal{thresh}})}{1-a}.
\end{align}
Now, unlike the unbounded function case, an interesting phenomenon can take place here: the activity penalty $\lambda$ can be so high that it is optimal to simply not transmit, even if $S$ is high. In other words, transmitting a packet will cost us more than letting the system evolve on its own without any intervention. Our aim becomes to characterize this regime and derive a condition on $\lambda$ to know when this phenomenon occurs. If a threshold exists, it can be found using eq. (\ref{optimalthresholdbounded}). Therefore, if $H'(S_{\textnormal{thresh}})\leq0$, then the optimal policy is to stay idle. In other words, if
\begin{equation}
\lambda\geq \frac{(\beta-a)(f(S_{\textnormal{thresh}})-\theta'_{S_{\textnormal{thresh}}})}{1-\beta},
\end{equation}
then the optimal policy is to stay idle. On the other hand, if $\lambda$ does not verify the above inequality, then the optimal policy is a threshold policy where the threshold can be found using eq. (\ref{optimalthresholdbounded}).

\section{Proof of Proposition \ref{propositionupdaterate}}
\label{proofpropositionupdaterate}
To proceed with our proof, we recall that the optimal transmission policy $\pi^*_\lambda$ is a threshold policy with a threshold $n^*_{\lambda}$. Trivially, if $n^*_{\lambda}=0$, a packet transmission is initiated at each timeslot and $C_{\pi^*_{\lambda}}=1$. In the case where $n^*_{\lambda}>0$, we note that the system's state $S_t$ evolves as a Discrete-Time Markov Chain (\textbf{DTMC}) reported in Fig. \ref{thresholdchain}. Note that we first focus on the case of unbounded function $f(\cdot)$. By leveraging the general balance equations, we can show that the stationary distribution of the DTMC is
\begin{align}
\sigma_0(n^*_{\lambda})&=\frac{1}{1+\frac{(1-\alpha)(1-\beta^{n^*_{\lambda}})}{1-\beta}+\frac{(1-\alpha)a\beta^{n^*_{\lambda}-1}}{1-a}},\nonumber\\
\sigma_k(n^*_{\lambda})&=\begin{cases}
(1-\alpha)\beta^{k-1}\sigma_0(n^*_{\lambda}), \: &\textnormal{if} \:\: 1\leq k\leq n^*_{\lambda},\\
(1-\alpha)\beta^{n^*_{\lambda}-1}a^{k-n^*_{\lambda}}\sigma_0(n^*_{\lambda}), \: &\textnormal{if} \:\: k\geq n^*_{\lambda}+1.\\
\end{cases}
\end{align}
Given the above expressions, and knowing that $C_{\pi^*_{\lambda}}=\sum\limits_{k=n}^{+\infty}\sigma_k(n^*_{\lambda})$, we can obtain the results of the proposition. By following a similar analysis for the bounded function case, we can show that the stationary distribution has the following expression
\begin{align}
\sigma_0(n^*_{\lambda})&=\frac{1}{1+\frac{(1-\alpha)(1-\beta^{n^*_{\lambda}})}{1-\beta}+\frac{(1-\alpha)a\beta^{n^*_{\lambda}-1}}{1-a}},\nonumber\\
\sigma_k(n^*_{\lambda})&=\begin{cases}
(1-\alpha)\beta^{k-1}\sigma_0(n^*_{\lambda}),  &\hspace{-10pt}\textnormal{if} \:\: 1\leq k\leq n^*_{\lambda},\\
(1-\alpha)\beta^{n^*_{\lambda}-1}a^{k-n^*_{\lambda}}\sigma_0(n^*_{\lambda}),  &\hspace{-22pt}\textnormal{if} \:\: n^*_{\lambda}+1 \leq k\leq  S_{\textnormal{thresh}}\\
(1-\alpha)\beta^{n^*_{\lambda}-1}\frac{a^{S_{\textnormal{thresh}}-n^*_{\lambda}}}{1-a}
\sigma_0(n^*_{\lambda}),  &\hspace{-10pt}\textnormal{if} \:\:  k=S_{\textnormal{thresh}}.
\end{cases}
\end{align}
Next, we demonstrate that the average update rate $C_{\pi^*_{\lambda}}=\sum\limits_{k=n}^{S_{\textnormal{thresh}}}\sigma_k(n^*_{\lambda})$ has the same expression as the unbounded case for $n^*_{\lambda}\in\mathbb{S}$. Note that, the average update rate is equal to $0$ when $n^*_{\lambda}>S_{\textnormal{thresh}}$.
\begin{figure}[t]
\centering
\includegraphics[width=.99\linewidth]{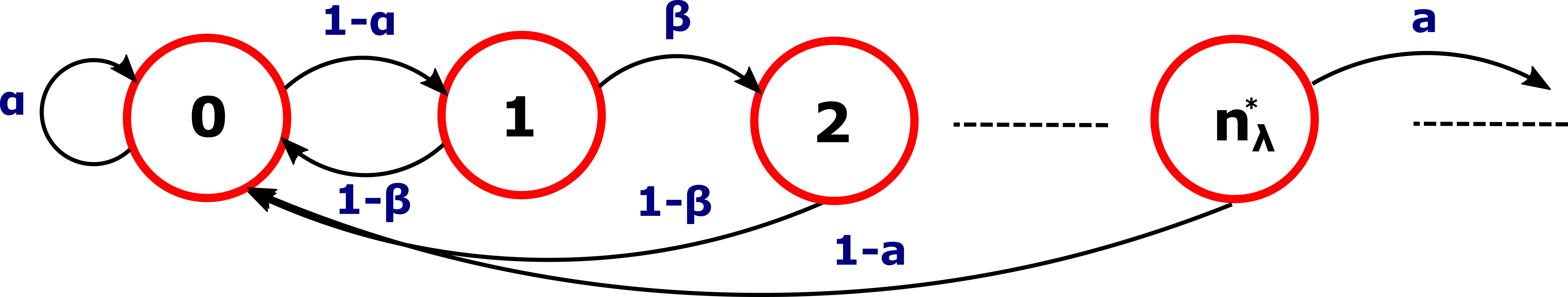}
\caption{The states transitions given a fixed threshold.}
\label{thresholdchain}
\setlength{\belowcaptionskip}{-15pt}
\end{figure}
\vspace{-10pt}
\section{Proof of Theorem \ref{theoremrelatingtoconstrained}}
\label{prooftheoremrelatingtoconstrained}
Let us first study the unbounded function case. To establish our theorem, we need to show that the constrained problem reported in (\ref{originalobjective}) verifies key properties listed in the assumptions of \cite[Theorem~2.5]{sennott1993}. To do so, let $\mathfrak{R}(s,G)$ be the class of policies such that
\begin{equation}
\Pr(S_t\in G \textnormal{ for some $t\geq0$}|S_0=s)=1,
\end{equation}
and the expected time $m_{sG}$ of a first passage from $s$ to $G$ is finite. Let $\mathfrak{R}^*(s,G)$ be the class of policies $\pi\in \mathfrak{R}(s,G)$ such that, in addition, the expected AoII and update cost of a first passage from $s$ to G are finite. We can now prove that our problem verifies the assumptions.

\textbf{Assumption 1} - For all $r>0$, the set $G(r)=\{s:$ there exists an action $\psi$ such that $f(s) + \psi\leq r\}$ is finite: To prove this assumption, we note that $f(\cdot)$ is a non-decreasing unbounded function. Accordingly, given that $\lim_{S \to +\infty} f(S)=+\infty$, we have
\begin{equation}
\forall M>0,\:\exists S': \forall S\geq S',  \: f(S)>M.
\end{equation}
Given that $\psi\in\{0,1\}$, we set $M=r$ to deduce that $G(r)\subseteq[0,S'-1]$, which is a finite set.

\textbf{Assumption 2} -  There exists a stationary policy $\pi$ such that it induces a Markov chain where the state space consists of a single (nonempty) positive recurrent class $R$ and a set $U$ of transient states such that $\pi\in\mathfrak{R}^*(i,R)$, for $i\in U$, and both the average AoII and update rate are finite:  To prove this assumption, we consider the always update policy $\pi_{\textnormal{au}}$ that transmits a packet at each timeslot. Given the system's dynamics reported in Section \ref{penaltydynamics}, we can conclude that the state space of the Markov chain induced by this policy consists of a single recurrent class $R=\mathbb{N}$ (the transient set $U$ is empty). Moreover, we have $C_{\pi}=1$ and, given the assumption on $f(\cdot)$ found in (\ref{conditiononf}), we can deduce that the average AoII of $\pi_{\textnormal{au}}$ is finite. 

\textbf{Assumption 3} -  Given any two states $S_1\neq S_2$, there exists a policy $\pi$ such that $\pi\in\mathfrak{R}^*(i,j)$:  To prove this assumption, let us suppose without loss of generality that $S_2\geq S_1$. By considering the always update policy, we can see that there is a non-zero probability to go from state $S_1$ to state $S_2$ and vice-versa. The expected AoII and update costs of the first passage from $S_1$ to $S_2$ (or vice-versa) are trivially finite.     

\textbf{Assumption 4} -  If a stationary policy $\pi$ has at least one positive recurrent state,
then it has a single positive recurrent class $R$. Moreover, if $0\not\in R$, then $\pi\in\mathfrak{R}^*(0,R)$: To show this, we simply note that whatever the transmission policy is, there is a non-zero probability to go from any state $S\in\mathbb{N}^*$ to state $0$ and vice-versa. Therefore, any recurrent class must contain the state $0$. Hence, we can conclude that there can only be one single positive recurrent class. 

\textbf{Assumption 5} - There exists a policy $\pi$ such that the average AoII is finite and $C_{\pi}<\delta$: To show this, we can consider a threshold policy $\pi_{n_0}$ where the threshold $n_0=\inf\{n:\mathbb{N}:C_{\pi_n}<\delta\}$. Note that the update rate $C_{\pi_n}$ is strictly decreasing with $n$ \cite{sennott1993}, which ensures the existence of $n_0$. Given the assumption on $f$ found in (\ref{conditiononf}), we can conclude that the AoII is finite. 

Given the above assumptions, we can leverage the results of \cite{sennott1993} (in particular, Theorem 2.5, Proposition 3.2, Lemma 3.4, and Lemma 3.9). These results affirm that the optimal transmission policy of the constrained problem is a mixture of two policies such that\begin{itemize}
\item The two policies coincide with those of the optimal policy of problem (\ref{firstobjective}) for a certain $\lambda^*\geq 0$, but differ in at most a single state.
\item $\lambda^*$ is defined as $\lambda^*\triangleq\inf\{\lambda\in\mathbb{R}^+: C_{\pi^*_{\lambda^*}}\leq\delta\}$.
\item The parameter $\mu^*\in[0,1]$ ensures that the update rate constraint is verified with equality.
\end{itemize}
Given the above results, we can conclude the statements of our theorem.

As for the bounded function case, we first discuss the validity of Hypothesis 2.2 and Hypothesis 4.1 of \cite{BEUTLER1985236} for our problem. To that end, we have:

\textbf{Hypothesis 2.2.} - For any stationary policy, the state $0$ is accessible from any $S\in\mathbb{S}$: This hypothesis holds for our problem as seen in the system's dynamics reported in Section \ref{penaltydynamics}. 

\textbf{Hypothesis 4.1.} - Let $\tilde{\Pi}$ denote the set of optimal policies for the unconstrained version of the problem in eq. (\ref{originalobjective}). Suppose that $C_{\tilde{\pi}}>\delta$ for every $\tilde{\pi}\in\tilde{\Pi}$ and that there exists a stationary policy $\hat{\pi}$ such that $C_{\hat{\pi}}<\delta$: First, it is easy to see that the never transmit policy $\hat{\pi}$ has an average update rate $C_{\hat{\pi}}=0$. Next, a careful investigation of this hypothesis is needed as there could be cases where $C_{\tilde{\pi}}\leq\delta$. To see this more clearly, consider the stationary policy $\tilde{\pi}$ where a transmission is initiated only when $S_t\neq0$. By using the expression provided in Proposition \ref{propositionupdaterate}, it can be shown that $C_{\tilde{\pi}}=\frac{1-\alpha}{2-\alpha-a}$. Moreover, given the system's dynamics, the Bellman equation in state $0$ for the unconstrained version of the problem in eq. (\ref{originalobjective}) can be written as follows
\begin{align}
\theta' + V'(0)=&\min\big\{f(0)+\alpha V'(0)+(1-\alpha)V'(1);f(0)\nonumber\\&+\alpha V'(0)+(1-\alpha)V'(1)\big\}.
\end{align}
In other words, transmitting a packet in state $S=0$ does not have any impact on the performance. Therefore, if $\delta\geq \frac{1-\alpha}{2-\alpha-a}$, then the constraint becomes redundant and the AoII optimal policy can be obtained by transmitting whenever $S\neq0$. \\
Now, let us focus on the case where $\delta<\frac{1-\alpha}{2-\alpha-a}$. In this case, the two hypotheses hold. Let us define $\lambda^*\triangleq\inf\{\lambda\in\mathbb{R}^+: C_{\pi^*_{\lambda^*}}\leq\delta\}$. By using Theorem 4.4 \cite{BEUTLER1985236}, we can deduce that the optimal transmission policy of the constrained problem is a mixture of two policies such that
\begin{itemize}
\item The two policies coincide with those of the optimal policy of problem (\ref{firstobjective}) for $\lambda^*\geq 0$, but differ in at most a single state.
\item The parameter $\mu^*\in[0,1]$ ensures that the update rate constraint is verified with equality.
\end{itemize}
Given the above results, we can conclude the statements of the theorem.

\newpage
\onecolumn
\section{Algorithms pseudo-code}
\label{algorithmpseudocode}
\begin{algorithm}
\caption{AoII Optimal Policy - Unbounded Function}\label{euclid}
\begin{algorithmic}[1]
\State \textbf{Input}: the system's parameters $\alpha, \beta, p_s, \delta$ and the convergence tolerance $\epsilon$
\If {$\delta=1$} skip the algorithm and transmit at every timeslot $t$
\Else 
\State \textbf{Init.} $\lambda_{\textnormal{min}}\leftarrow0$, $\lambda_{\textnormal{max}}\leftarrow1$
\State $n^*_{\lambda_{\textnormal{max}}}\leftarrow$ FindThreshold$(\alpha, \beta, p_s, \lambda_{\textnormal{max}})$
\State  $C\leftarrow C_{\pi^*_{\lambda_{\textnormal{max}}}}$  using Proposition \ref{propositionupdaterate}
\While {$C>\delta$}
\State $\lambda_{\textnormal{min}}\leftarrow\lambda_{\textnormal{max}}$, $\lambda_{\textnormal{max}}\leftarrow 2\lambda_{\textnormal{max}}$
\State $n^*_{\lambda_{\textnormal{max}}}\leftarrow$ FindThreshold$(\alpha, \beta, p_s, \lambda_{\textnormal{max}})$
\State  $C\leftarrow C_{\pi^*_{\lambda_{\textnormal{max}}}}$  using Proposition \ref{propositionupdaterate}
\EndWhile
\State $\xi\leftarrow\frac{\lambda_{\textnormal{min}}+\lambda_{\textnormal{max}}}{2}$
\While {$|\xi-\lambda_{\textnormal{max}}|>\epsilon$}
\State $n^*_{\xi}\leftarrow$ FindThreshold$(\alpha, \beta, p_s, \xi)$
\State $C\leftarrow C_{\pi^*_{\xi}}$  using Proposition \ref{propositionupdaterate}
\If {$C>\delta$} $\lambda_{\textnormal{min}}\leftarrow\xi$ 
\Else $\:\:\lambda_{\textnormal{max}}\leftarrow\xi$
\EndIf
\EndWhile
\State $\lambda^*\leftarrow\xi$
\If {$C>\delta$} $n^*_{\lambda^*}\leftarrow n^*+1$, $C_{\pi^*_{\lambda^*,1}}\leftarrow C$, $C_{\pi^*_{\lambda^*,2}}\leftarrow\frac{(1-\alpha)\beta^{n^*_{\lambda^*}-1}}{(1-a)(1+\frac{(1-\alpha)(1-\beta^{n^*_{\lambda^*}})}{1-\beta}+\frac{(1-\alpha)a\beta^{n^*_{\lambda^*}-1}}{1-a})}$
\Else $\:\:n^*_{\lambda^*}\leftarrow n^*$, $C_{\pi^*_{\lambda^*,2}}\leftarrow C$
   \If {$n^*=1$} $C_{\pi^*_{\lambda^*,1}}\leftarrow 1$
\Else $\:\: n^*\leftarrow n^*-1$
\State $\:\:C_{\pi^*_{\lambda^*,1}}\leftarrow\frac{(1-\alpha)\beta^{n^*-1}}{(1-a)(1+\frac{(1-\alpha)(1-\beta^{n^*})}{1-\beta}+\frac{(1-\alpha)a\beta^{n^*-1}}{1-a})}$
\EndIf
\EndIf
\State $\mu^*\leftarrow\frac{\delta-C_{\pi^*_{\lambda^*,2}}}{C_{\pi^*_{\lambda^*,1}}-C_{\pi^*_{\lambda^*,2}}}$
\State \textbf{Output}: $n^*_{\lambda^*},\mu^*$
\EndIf
\Procedure{FindThreshold}{$\alpha, \beta, p_s, \lambda$}
\State \textbf{Init.} $N_{LB}\leftarrow 1, N_{UB}\leftarrow 1$
\While {$H(N_{UB})\leq0$}
\State $N_{LB}\leftarrow N_{UB}, N_{UB}\leftarrow 2N_{UB}$
\EndWhile
\State $n'\leftarrow\ceil*{\frac{N_{LB}+N_{UB}}{2}} $
\While {$n'<N_{UB}$}
\If {$H(n')\leq0$} $N_{LB}\leftarrow n'$
\Else $\:\:N_{UB}\leftarrow n'$
\EndIf
\State $n'\leftarrow\ceil*{\frac{N_{LB}+N_{UB}}{2}} $
\EndWhile
\State \textbf{Output}: the optimal threshold $n^*_{\lambda}\leftarrow n'-1$
\EndProcedure
\end{algorithmic}
\end{algorithm}
\setlength{\textfloatsep}{0pt}
\onecolumn
\begin{algorithm}
\caption{AoII Optimal Policy - Bounded Function}\label{euclid}
\begin{algorithmic}[1]
\State \textbf{Input}: the system's parameters $\alpha, \beta, p_s, \delta, S_{\textnormal{thresh}}$ and the convergence tolerance $\epsilon$
\If {$\delta\geq \frac{1-\alpha}{2-\alpha-a}$} skip the algorithm and transmit at every timeslot $t$ when $S_t\neq0$
\Else 
\State \textbf{Init.} $\lambda_{\textnormal{min}}\leftarrow0$, $\lambda_{\textnormal{max}}\leftarrow1$
\State $n^*_{\lambda_{\textnormal{max}}}\leftarrow$ FindThreshold$(\alpha, \beta, p_s, \lambda_{\textnormal{max}})$
\State  $C\leftarrow C_{\pi^*_{\lambda_{\textnormal{max}}}}$  using Proposition \ref{propositionupdaterate}
\While {$C>\delta$}
\State $\lambda_{\textnormal{min}}\leftarrow\lambda_{\textnormal{max}}$, $\lambda_{\textnormal{max}}\leftarrow 2\lambda_{\textnormal{max}}$
\State $n^*_{\lambda_{\textnormal{max}}}\leftarrow$ FindThreshold$(\alpha, \beta, p_s, \lambda_{\textnormal{max}}), \:\: C\leftarrow C_{\pi^*_{\lambda_{\textnormal{max}}}}$ using Proposition \ref{propositionupdaterate}

\EndWhile
\State $\xi\leftarrow\frac{\lambda_{\textnormal{min}}+\lambda_{\textnormal{max}}}{2}$
\While {$|\xi-\lambda_{\textnormal{max}}|>\epsilon$}
\State $n^*_{\xi}\leftarrow$ FindThreshold$(\alpha, \beta, p_s, \xi), \:\: C\leftarrow C_{\pi^*_{\xi}}$ using Proposition \ref{propositionupdaterate}
\If {$C>\delta$} $\lambda_{\textnormal{min}}\leftarrow\xi$ 
\Else $\:\:\lambda_{\textnormal{max}}\leftarrow\xi$
\EndIf
\EndWhile
\State $\lambda^*\leftarrow\xi$
\If {$C>\delta$} $n^*_{\lambda^*}\leftarrow n^*+1$, $C_{\pi^*_{\lambda^*,1}}\leftarrow C$, 
\If {$n^*_{\lambda^*}=S_{\textnormal{thresh}}+1$} $C_{\pi^*_{\lambda^*,2}}\leftarrow0$
\Else $\:\:C_{\pi^*_{\lambda^*,2}}\leftarrow\frac{(1-\alpha)\beta^{n^*_{\lambda^*}-1}}{(1-a)(1+\frac{(1-\alpha)(1-\beta^{n^*_{\lambda^*}})}{1-\beta}+\frac{(1-\alpha)a\beta^{n^*_{\lambda^*}-1}}{1-a})}$
\EndIf
\Else $\:\:n^*_{\lambda^*}\leftarrow n^*$, $C_{\pi^*_{\lambda^*,2}}\leftarrow C$
   \If {$n^*=1$} $C_{\pi^*_{\lambda^*,1}}\leftarrow 1$
\Else $\:\: n^*\leftarrow n^*-1$
\State $\:\:C_{\pi^*_{\lambda^*,1}}\leftarrow\frac{(1-\alpha)\beta^{n^*-1}}{(1-a)(1+\frac{(1-\alpha)(1-\beta^{n^*})}{1-\beta}+\frac{(1-\alpha)a\beta^{n^*-1}}{1-a})}$
\EndIf
\EndIf
\State $\mu^*\leftarrow\frac{\delta-C_{\pi^*_{\lambda^*,2}}}{C_{\pi^*_{\lambda^*,1}}-C_{\pi^*_{\lambda^*,2}}}$
\State \textbf{Output}: $n^*_{\lambda^*},\mu^*$
\EndIf
\Procedure{FindThreshold}{$\alpha, \beta, p_s, \lambda$}
\If {$\lambda\geq\frac{(\beta-a)(f(S_{\textnormal{thresh}})-\theta_{S_{\textnormal{thresh}}})}{1-\beta}$} $n^*_{\lambda}\leftarrow S_{\textnormal{thresh}}+1$
\Else 
\State \textbf{Init.} $N_{LB}\leftarrow 1, N_{UB}\leftarrow 1$
\While {$H'(N_{UB})\leq0$}
\State $N_{LB}\leftarrow N_{UB}, N_{UB}\leftarrow \min\{2N_{UB},S_{\textnormal{thresh}}\}$
\EndWhile
\State $n'\leftarrow\ceil*{\frac{N_{LB}+N_{UB}}{2}} $
\While {$n'<N_{UB}$}
\If {$H'(n')\leq0$} $N_{LB}\leftarrow n'$
\Else $\:\:N_{UB}\leftarrow n'$
\EndIf
\State $n'\leftarrow \ceil*{\frac{N_{LB}+N_{UB}}{2}} $
\EndWhile
\EndIf
\State \textbf{Output}: the optimal threshold $n^*_{\lambda}\leftarrow n'-1$
\EndProcedure
\end{algorithmic}
\end{algorithm}

\end{document}